\newcounter{resultnum}[section]\setcounter{resultnum}{0}
\newtheorem{conclusion}{Conclusion}[section]
\newcounter{conclusionnum}[section]\setcounter{conclusionnum}{0}
\newcounter{conditionnum}[section]\setcounter{conditionnum}{0}
\newcounter{conjecturenum}[section]\setcounter{conjecturenum}{0}
\newcounter{examplenum}[section]\setcounter{examplenum}{0}
\newcounter{exercisenum}[section]\setcounter{exercisenum}{0}
\newtheorem{lemma}{Lemma}[section]
\newcounter{lemmanum}[section]\setcounter{lemmanum}{0}
\newcounter{notationnum}[section]\setcounter{notationnum}{0}
\newtheorem{theorem}{Theorem}[section]
\newcounter{theoremnum}[section]\setcounter{theoremnum}{0}
\newcounter{definitionnum}[section]\setcounter{definitionnum}{0}
\newtheorem{corollary}{Corollary}[section]
\newcounter{corollarynum}[section]\setcounter{corollarynum}{0}
\newcounter{remarknum}[section]\setcounter{remarknum}{0}
\newcounter{propositionnum}[section]\setcounter{propositionnum}{0}
\newcounter{acknowledgementnum}[section]\setcounter{acknowledgementnum}{0}
\newcounter{algorithmnum}[section]\setcounter{algorithmnum}{0}
\newcounter{axiomnum}[section]\setcounter{axiomnum}{0}
\newcounter{casenum}[section]\setcounter{casenum}{0}
\newcounter{claimnum}[section]\setcounter{claimnum}{0}
\newcounter{summarynum}[section]\setcounter{summarynum}{0}
\newcounter{problemnum}[section]\setcounter{problemnum}{0}
\newenvironment{proof}[1][]{\textbf{Proof.} }{}
\begin{document}

\title{Fractional Curve Flows and Solitonic Hierarchies \\
in Gravity and Geometric Mechanics}
\date{April 7, 2011}
\author{\textbf{Dumitru Baleanu}\thanks{%
On leave of absence from Institute of Space Sciences, P. O. Box, MG-23, R
76900, Magurele--Bucharest, Romania, \newline
E--mails: dumitru@cancaya.edu.tr, baleanu@venus.nipne.ro, Phone:
+90-312-2844500} \\
%EndAName
\textsl{\small Department of Mathematics and Computer Sciences,} \\
%EndAName
\textsl{\small \c Cankaya University, 06530, Ankara, Turkey } \\
\and %EndAName
\textbf{Sergiu I. Vacaru} \thanks{%
sergiu.vacaru@uaic.ro, Sergiu.Vacaru@gmail.com \newline
http://www.scribd.com/people/view/1455460-sergiu} \and \textsl{\small %
Science Department, University "Al. I. Cuza" Ia\c si, } \\
%EndAName
\textsl{\small 54, Lascar Catargi street, Ia\c si, Romania, 700107 } }
\maketitle

\begin{abstract}
Methods from the geometry of nonholonomic manifolds and Lagran\-ge--Finsler spaces are applied in fractional calculus with Caputo derivatives and for elaborating models of fractional gravity and fractional Lagrange mechanics. The geometric data for such models are encoded into (fractional) bi--Hamiltonian structures and associated solitonic hierarchies. The constructions yield horizontal/vertical pairs of fractional vector
sine--Gordon equations and fractional  vector mKdV equations when the
hierarchies for corresponding curve fractional flows are described in
explicit forms by fractional wave maps and analogs of Schr\" odinger maps.

\vskip0.2cm

\textbf{Keywords:}\ fractional curve flow, solitonic hierarchy, nonlinear
connection, fractional gravity, bi--Hamilton structure, fractional geometric
mechanics.

\vskip3pt

MSC2010:\ 26A33, 35C08, 37K10, 53C60, 53C99, 70S05, 83C15

PACS:\ 02.30.Ik, 45.10Hj, 45.10.Na, 05.45.Yv, 02.40.Yy, 45.20.Jj, 04.20.Jb,
04.50.Kd
\end{abstract}

%\tableofcontents

\section{Introduction}

The goal of this paper is to show how fractional solitonic hierarchies can
be canonically generated in various models of fractional gravity and
geometric Lagrange mechanics. Such constructions are possible in explicit
form for a class of fractional derivatives resulting in zero for actions on
constants, for instance, for the Caputo fractional derivative \cite%
{trujillo,agrawal,13,klimek,baleanu1,baleanu2}. This property is crucial for
constructing geometric models of theories with fractional calculus even,
after corresponding nonholonomic deformations, we may prefer to work with
another type of fractional derivatives.

This is the second partner work of paper \cite{bv3} (we also recommend
readers to consult in advance the papers \cite{bv1,bv2,vrfrf,vrfrg} on
details, notation conventions and bibliography) where we proved an important
result that via nonholonomic deformations on fractional manifolds and bundle
spaces, determined by a generating fundamental Lagrange/ --Finsler, or an
Einstein metric, we can construct linear connections with constant
coefficient curvature. For such fractional "covariant" connections, it is
possible to provide a formal encoding of integer and non--integer
gravitational dynamics, Ricci flow evolution and constrained
Lagrange/Hamilton mechanics into hierarchies of solitonic equations.

The most important consequence of such geometric studies is that using
bi--Hamilton models and related solitonic systems we can study analytically
and numerically, as well to try to construct some analogous mechanical
systems, with the aim to mimic a nonlinear/fractional nonholonomic
dynamics/evolution and even to provide certain schemes of quantization, like
in the "fractional" Fedosov approach \cite{bv2,vfed4}.

This work is organized in the form: In section \ref{s2}, we remember the
most important formulas on Caputo fractional derivatives and nonlinear
connections. Section \ref{s3} is devoted to definition of basic equations
for fractional curve flows. The Main Theorem on fractional bi--Hamiltonians
and solitonic hierarchies is formulated and proved in \ref{s4}. Finally, we
derive in general form the corresponding nonholonomic fractional solitonic
hierarchies in section \ref{s5}.

\section{Caputo Fractional Derivatives and N--connecti\-ons}

\label{s2}We summarize some important formulas on fractional calculus for
nonholonomic manifold elaborated both in global and coordinate free forms,
as well with important local integro--differential parametrizations,  in
Refs. \cite{vrfrf,vrfrg,bv1,bv3}. \ Readers are recommended to  study  in
advance those works (and references therein) on fractional differential
geometry and applications. \ Such a calculus is nonlocal both on space
and/or time coordinates when the algebra of fractional derivatives does not
have the same properties as in the integer case. Nevertheless, having
well--defined concepts of integral calculus for curved  nonholonomic
manifolds and bundles  of integer dimension, we can introduce such algebras
in local forms and then globalize the constructions using corresponding
charts on atlases covering corresponding spaces.

Our geometric arena consists from an abstract fractional manifold $\overset{%
\alpha }{\mathbf{V}}$ (we shall use also the term ''fractional space'' as an
equivalent one enabled with certain fundamental geometric structures) with
prescribed nonholonomic distribution modeling both the fractional calculus
and the non--integrable dynamics of interactions. We note that in our works
a corresponding system of notations is elaborated in a form to unify the
approaches on fractional calculus, nonholonomic bundle spaces, nonlinear
connection (N--connection) formalism etc. There are considered boldface
symbols, over/under and left up/low labels etc as we shall explain below.

Let us consider that  $f(x)$ is a derivable function  $f:[\ _{1}x,\
_{2}x]\rightarrow \mathbb{R},$ for $\mathbb{R}\ni \alpha >0,$ and denote the
derivative on $x$ as $\partial _{x}=\partial /\partial x.$ We note by $\
_{1}x$ and$\ _{2}x$ (with left low labels, respectively, $1$ and 2) two ends
of a real line interval. \ The fractional left, respectively, right Caputo
derivatives are denoted in the form
\begin{eqnarray}
&&\ _{\ _{1}x}\overset{\alpha }{\underline{\partial }}_{x}f(x):=\frac{1}{%
\Gamma (s-\alpha )}\int\limits_{\ \ _{1}x}^{x}(x-\ x^{\prime })^{s-\alpha
-1}\left( \frac{\partial }{\partial x^{\prime }}\right) ^{s}f(x^{\prime
})dx^{\prime };  \label{lfcd} \\
&&\ _{\ x}\overset{\alpha }{\underline{\partial }}_{\ _{2}x}f(x):=\frac{1}{%
\Gamma (s-\alpha )}\int\limits_{x}^{\ _{2}x}(x^{\prime }-x)^{s-\alpha
-1}\left( -\frac{\partial }{\partial x^{\prime }}\right) ^{s}f(x^{\prime
})dx^{\prime }\ .  \notag
\end{eqnarray}%
For instance,  we emphasize that the integral is considered from $\ _{1}x$
to $x$ in the symbol of partial derivative $\ _{\ _{1}x}\overset{\alpha }{%
\underline{\partial }}_{x}.$ We shall always put $\alpha $ over a symbol (or
up-left/-right to such a symbol) \ in order to emphasize that the
constructions are considered for a fractional calculus with  $\alpha \in
(0,1).$  There will \ be underlined some corresponding symbols if their
definition is strictly related to the concept of Caputo derivative.  \
Using  operators (\ref{lfcd}) generalized on $\mathbb{R}^{n},$ where $%
n=1,2...,$ and the same fractional $\alpha $ is associated to  any such
integer dimension,  we can construct the fractional absolute differential  $%
\overset{\alpha }{d}:=(dx^{j})^{\alpha }\ \ _{\ 0}\overset{\alpha }{%
\underline{\partial }}_{j}$ when $\ \overset{\alpha }{d}x^{j}=(dx^{j})^{%
\alpha }\frac{(x^{j})^{1-\alpha }}{\Gamma (2-\alpha )},$ where we consider $%
\ _{1}x^{i}=0.$

We denote a fractional tangent bundle in the form $\overset{\alpha }{%
\underline{T}}M$ \ for $\alpha \in (0,1),$ associated to a manifold $M$ of
necessary smooth class and integer $\dim M=n.$\footnote{%
The symbol $T$ is underlined in order to emphasize that we shall associate
the approach to a fractional Caputo derivative.} Locally, both the integer
and fractional local coordinates are written in the form $u^{\beta
}=(x^{j},y^{a}),$ where indices to coordinates run values $i,j,...=1,2,...n$
(for coordinates on base manifold) and $a,b,..=n+1,n+2,...,n+n$ (for typical
fiber coordinates) for integer dimensions but keep in mind that the local
derivatives are of fractional type (\ref{lfcd}), associated respectively to
any such integer coordinate.  A fractional frame basis $\overset{\alpha }{%
\underline{e}}_{\beta }=e_{\ \beta }^{\beta ^{\prime }}(u^{\beta })\overset{%
\alpha }{\underline{\partial }}_{\beta ^{\prime }}$ on $\overset{\alpha }{%
\underline{T}}M$ \ is connected via a vierlbein transform $e_{\ \beta
}^{\beta ^{\prime }}(u^{\beta })$ with a fractional local coordinate basis
\begin{equation}
\overset{\alpha }{\underline{\partial }}_{\beta ^{\prime }}=\left( \overset{%
\alpha }{\underline{\partial }}_{j^{\prime }}=_{\ _{1}x^{j^{\prime }}}%
\overset{\alpha }{\underline{\partial }}_{j^{\prime }},\overset{\alpha }{%
\underline{\partial }}_{b^{\prime }}=_{\ _{1}y^{b^{\prime }}}\overset{\alpha
}{\underline{\partial }}_{b^{\prime }}\right) ,  \label{frlcb}
\end{equation}%
for $j^{\prime }=1,2,...,n$ and $b^{\prime }=n+1,n+2,...,n+n.$ The
fractional co--bases are written $\overset{\alpha }{\underline{e}}^{\ \beta
}=e_{\beta ^{\prime }\ }^{\ \beta }(u^{\beta })\overset{\alpha }{d}u^{\beta
^{\prime }},$ where the fractional local coordinate co--basis is
\begin{equation}
\ _{\ }\overset{\alpha }{d}u^{\beta ^{\prime }}=\left( (dx^{i^{\prime
}})^{\alpha },(dy^{a^{\prime }})^{\alpha }\right) .  \label{frlccb}
\end{equation}

It is possible to define a nonlinear connection (N--connection) $\overset{%
\alpha }{\mathbf{N}}$ \ for a fractional space $\overset{\alpha }{\mathbf{V}}
$ by a nonholonomic distribution (Whitney sum) with conventional h-- and
v--subspaces, $\underline{h}\overset{\alpha }{\mathbf{V}}$ and $\underline{v}%
\overset{\alpha }{\mathbf{V}},$%
\begin{equation}
\overset{\alpha }{\underline{T}}\overset{\alpha }{\mathbf{V}}=\underline{h}%
\overset{\alpha }{\mathbf{V}}\mathbf{\oplus }\underline{v}\overset{\alpha }{%
\mathbf{V}}.  \label{whit}
\end{equation}%
Locally, such a fractional N--connection is characterized by its local
coefficients $\overset{\alpha }{\mathbf{N}}\mathbf{=}\{\ ^{\alpha
}N_{i}^{a}\},$ when $ \overset{\alpha }{\mathbf{N}}\mathbf{=}\ ^{\alpha
}N_{i}^{a}(u)(dx^{i})^{\alpha }\otimes \overset{\alpha }{\underline{\partial
}}_{a}.$

On $\overset{\alpha }{\mathbf{V}},$ it is convenient to work with N--adapted
fractional (co) frames,
\begin{eqnarray}
\ ^{\alpha }\mathbf{e}_{\beta } &=&\left[ \ ^{\alpha }\mathbf{e}_{j}=\overset%
{\alpha }{\underline{\partial }}_{j}-\ ^{\alpha }N_{j}^{a}\overset{\alpha }{%
\underline{\partial }}_{a},\ ^{\alpha }e_{b}=\overset{\alpha }{\underline{%
\partial }}_{b}\right] ,  \label{dder} \\
\ ^{\alpha }\mathbf{e}^{\beta } &=&[\ ^{\alpha }e^{j}=(dx^{j})^{\alpha },\
^{\alpha }\mathbf{e}^{b}=(dy^{b})^{\alpha }+\ ^{\alpha
}N_{k}^{b}(dx^{k})^{\alpha }].  \label{ddif}
\end{eqnarray}

A fractional metric structure (d--metric) $\ \overset{\alpha }{\mathbf{g}}%
=\{\ ^{\alpha }g_{\underline{\alpha }\underline{\beta }}\}=\left[ \ ^{\alpha
}g_{kj},\ ^{\alpha }g_{cb}\right] $ on $\overset{\alpha }{\mathbf{V}}$ \ can
be represented in different equivalent forms,
\begin{eqnarray}
\ \overset{\alpha }{\mathbf{g}} &=&\ ^{\alpha }g_{\underline{\gamma }%
\underline{\beta }}(u)(du^{\underline{\gamma }})^{\alpha }\otimes (du^{%
\underline{\beta }})^{\alpha }  \label{m1} \\
&=&\ ^{\alpha }g_{kj}(x,y)\ ^{\alpha }e^{k}\otimes \ ^{\alpha }e^{j}+\
^{\alpha }g_{cb}(x,y)\ ^{\alpha }\mathbf{e}^{c}\otimes \ ^{\alpha }\mathbf{e}%
^{b}  \notag \\
&=&\eta _{k^{\prime }j^{\prime }}\ ^{\alpha }e^{k^{\prime }}\otimes \
^{\alpha }e^{j^{\prime }}+\eta _{c^{\prime }b^{\prime }}\ ^{\alpha }\mathbf{e%
}^{c^{\prime }}\otimes \ ^{\alpha }\mathbf{e}^{b^{\prime }},  \notag
\end{eqnarray}%
where matrices $\eta _{k^{\prime }j^{\prime }}=diag[\pm 1,\pm 1,...,\pm 1]$
and $\eta _{a^{\prime }b^{\prime }}=diag[\pm 1,\pm 1,...,\pm 1],$ for the
signature of a ''prime'' spacetime $\mathbf{V,}$ are obtained by frame
transforms $\eta _{k^{\prime }j^{\prime }}=e_{\ k^{\prime }}^{k}\ e_{\
j^{\prime }}^{j}\ _{\ }^{\alpha }g_{kj}$ and $\eta _{a^{\prime }b^{\prime
}}=e_{\ a^{\prime }}^{a}\ e_{\ b^{\prime }}^{b}\ _{\ }^{\alpha }g_{ab}.$

We can adapt geometric objects on $\overset{\alpha }{\mathbf{V}}$ with
respect to a given N--connection structure $\overset{\alpha }{\mathbf{N}},$
calling them as distinguished objects (d--objects). For instance, a
distinguished connection (d--connection) $\overset{\alpha }{\mathbf{D}}$ on $%
\overset{\alpha }{\mathbf{V}}$ is defined as a linear connection preserving
under parallel transports the Whitney sum (\ref{whit}). There is an
associated N--adapted differential 1--form
\begin{equation}
\ ^{\alpha }\mathbf{\Gamma }_{\ \beta }^{\tau }=\ ^{\alpha }\mathbf{\Gamma }%
_{\ \beta \gamma }^{\tau }\ ^{\alpha }\mathbf{e}^{\gamma },  \label{fdcf}
\end{equation}%
parametrizing the coefficients (with respect to (\ref{ddif}) and (\ref{dder}%
)) in the form $\ ^{\alpha }\mathbf{\Gamma }_{\ \tau \beta }^{\gamma
}=\left( \ ^{\alpha }L_{jk}^{i},\ ^{\alpha }L_{bk}^{a},\ ^{\alpha
}C_{jc}^{i},\ ^{\alpha }C_{bc}^{a}\right) .$

The absolute fractional differential $\ ^{\alpha }\mathbf{d}=\ _{\ _{1}x}%
\overset{\alpha }{d}_{x}+\ _{\ _{1}y}\overset{\alpha }{d}_{y}$ acts on
fractional differential forms in N--adapted form. This is a fractional
distinguished operator, d--operator, when the value $\ ^{\alpha }\mathbf{d:=}%
\ ^{\alpha }\mathbf{e}^{\beta }\ ^{\alpha }\mathbf{e}_{\beta }$ splits into
exterior h- and v--derivatives when
\begin{equation*}
\ _{\ _{1}x}\overset{\alpha }{d}_{x}:=(dx^{i})^{\alpha }\ \ _{\ _{1}x}%
\overset{\alpha }{\underline{\partial }}_{i}=\ ^{\alpha }e^{j}\ ^{\alpha }%
\mathbf{e}_{j}\mbox{ and }_{\ _{1}y}\overset{\alpha }{d}_{y}:=(dy^{a})^{%
\alpha }\ \ _{\ _{1}x}\overset{\alpha }{\underline{\partial }}_{a}=\
^{\alpha }\mathbf{e}^{b}\ ^{\alpha }e_{b}.
\end{equation*}%
Using such differentials, we can compute in explicit form the torsion and
curvature (as fractional two d--forms derived for (\ref{fdcf})) of a
fractional d--connection $\overset{\alpha }{\mathbf{D}}=\{\ ^{\alpha }%
\mathbf{\Gamma }_{\ \beta \gamma }^{\tau }\},$
\begin{eqnarray}
\ ^{\alpha }\mathcal{T}^{\tau } &\doteqdot &\overset{\alpha }{\mathbf{D}}\
^{\alpha }\mathbf{e}^{\tau }=\ ^{\alpha }\mathbf{d}\ ^{\alpha }\mathbf{e}%
^{\tau }+\ ^{\alpha }\mathbf{\Gamma }_{\ \beta }^{\tau }\wedge \ ^{\alpha }%
\mathbf{e}^{\beta }\mbox{ and }  \label{tors} \\
\ ^{\alpha }\mathcal{R}_{~\beta }^{\tau } &\doteqdot &\overset{\alpha }{%
\mathbf{D}}\mathbf{\ ^{\alpha }\Gamma }_{\ \beta }^{\tau }=\ ^{\alpha }%
\mathbf{d\ ^{\alpha }\Gamma }_{\ \beta }^{\tau }-\ ^{\alpha }\mathbf{\Gamma }%
_{\ \beta }^{\gamma }\wedge \ ^{\alpha }\mathbf{\Gamma }_{\ \gamma }^{\tau
}=\ ^{\alpha }\mathbf{R}_{\ \beta \gamma \delta }^{\tau }\ ^{\alpha }\mathbf{%
e}^{\gamma }\wedge \ ^{\alpha }\mathbf{e}^{\delta }.  \notag
\end{eqnarray}

Contracting respectively the indices, we can compute the fractional Ricci
tensor $\ ^{\alpha }\mathcal{R}ic=\{\ ^{\alpha }\mathbf{R}_{\alpha \beta
}\doteqdot \ ^{\alpha }\mathbf{R}_{\ \alpha \beta \tau }^{\tau }\}$ with
components
\begin{equation}
\ ^{\alpha }R_{ij}\doteqdot \ ^{\alpha }R_{\ ijk}^{k},\ \ \ ^{\alpha
}R_{ia}\doteqdot -\ ^{\alpha }R_{\ ika}^{k},\ \ ^{\alpha }R_{ai}\doteqdot \
^{\alpha }R_{\ aib}^{b},\ \ ^{\alpha }R_{ab}\doteqdot \ ^{\alpha }R_{\
abc}^{c}  \label{dricci}
\end{equation}%
and the scalar curvature of fractional d--connection $\overset{\alpha }{%
\mathbf{D}},$
\begin{equation}
\ _{s}^{\alpha }\mathbf{R}\doteqdot \ ^{\alpha }\mathbf{g}^{\tau \beta }\
^{\alpha }\mathbf{R}_{\tau \beta }=\ ^{\alpha }R+\ ^{\alpha }S,\ ^{\alpha
}R=\ ^{\alpha }g^{ij}\ ^{\alpha }R_{ij},\ \ ^{\alpha }S=\ ^{\alpha }g^{ab}\
^{\alpha }R_{ab},  \label{sdccurv}
\end{equation}%
with $\ ^{\alpha }\mathbf{g}^{\tau \beta }$ being the inverse coefficients
to a d--metric (\ref{m1}).\footnote{%
For applications in modern gravity and geometric mechanics, we can
considered more special classes of d--connections:
\par
\begin{itemize}
\item There is a unique canonical metric compatible fractional d--connection
$\ ^{\alpha }\widehat{\mathbf{D}}=\{\ ^{\alpha }\widehat{\mathbf{\Gamma }}%
_{\ \alpha \beta }^{\gamma }=\left( \ ^{\alpha }\widehat{L}_{jk}^{i},\
^{\alpha }\widehat{L}_{bk}^{a},\ ^{\alpha }\widehat{C}_{jc}^{i},\ ^{\alpha }%
\widehat{C}_{bc}^{a}\right) \},$ when $\ ^{\alpha }\widehat{\mathbf{D}}\
\left( \ ^{\alpha }\mathbf{g}\right) =0,$ satisfying the conditions $\
^{\alpha }\widehat{T}_{\ jk}^{i}=0$ and $\ ^{\alpha }\widehat{T}_{\
bc}^{a}=0,$ but $\ ^{\alpha }\widehat{T}_{\ ja}^{i},\ ^{\alpha }\widehat{T}%
_{\ ji}^{a}$ and $\ ^{\alpha }\widehat{T}_{\ bi}^{a}$ are not zero. The
N--adapted coefficients are explicitly determined by the coefficients of (%
\ref{m1}),
\begin{eqnarray*}
\ ^{\alpha }\widehat{L}_{jk}^{i} &=&\frac{1}{2}\ ^{\alpha }g^{ir}\left( \
^{\alpha }\mathbf{e}_{k}\ ^{\alpha }g_{jr}+\ ^{\alpha }\mathbf{e}_{j}\
^{\alpha }g_{kr}-\ ^{\alpha }\mathbf{e}_{r}\ ^{\alpha }g_{jk}\right) ,
\notag \\
\ ^{\alpha }\widehat{L}_{bk}^{a} &=&\ ^{\alpha }e_{b}(\ ^{\alpha }N_{k}^{a})+%
\frac{1}{2}\ ^{\alpha }g^{ac}(\ ^{\alpha }\mathbf{e}_{k}\ ^{\alpha }g_{bc}-\
^{\alpha }g_{dc}\ \ ^{\alpha }e_{b}\ ^{\alpha }N_{k}^{d}-\ ^{\alpha }g_{db}\
\ ^{\alpha }e_{c}\ ^{\alpha }N_{k}^{d}),  \notag \\
\ ^{\alpha }\widehat{C}_{jc}^{i} &=&\frac{1}{2}\ ^{\alpha }g^{ik}\ ^{\alpha
}e_{c}\ ^{\alpha }g_{jk},\ \ ^{\alpha }\widehat{C}_{bc}^{a} = \frac{1}{2}\
^{\alpha }g^{ad}\left( \ ^{\alpha }e_{c}\ ^{\alpha }g_{bd}+\ ^{\alpha
}e_{c}\ ^{\alpha }g_{cd}-\ ^{\alpha }e_{d}\ ^{\alpha }g_{bc}\right) .
\label{candcon}
\end{eqnarray*}%
\par
\item The fractional Levi--Civita connection $\ ^{\alpha }\nabla =\{\ \
^{\alpha }\Gamma _{\ \alpha \beta }^{\gamma }\}$ can be defined in standard
from but for the fractional Caputo left derivatives acting on the
coefficients of a fractional metric (\ref{m1}).
\end{itemize}
}

The Einstein tensor of any metric compatible $\overset{\alpha }{\mathbf{D}},
$ when $\overset{\alpha }{\mathbf{D}}_{\tau }\ ^{\alpha }\mathbf{g}^{\tau
\beta }=0,$ is defined $\ ^{\alpha }\mathcal{E}ns=\{\ ^{\alpha }\mathbf{G}%
_{\alpha \beta }\},$ where
\begin{equation}
\ ^{\alpha }\mathbf{G}_{\alpha \beta }:=\ ^{\alpha }\mathbf{R}_{\alpha \beta
}-\frac{1}{2}\ ^{\alpha }\mathbf{g}_{\alpha \beta }\ \ _{s}^{\alpha }\mathbf{%
R.}  \label{enstdt}
\end{equation}

The regular fractional mechanics defined by a fractional Lagrangian $\overset%
{\alpha }{L}$ can be equivalently encoded into canonical geometric data $(\
_{L}\overset{\alpha }{\mathbf{N}},\ _{L}\overset{\alpha }{\mathbf{g}},\
_{c}^{\alpha }\mathbf{D}),$ where we put the label $L$ in order to emphasize
that such geometric objects are induced by a fractional Lagrangian as we
provided in \cite{vrfrf,vrfrg,bv1,bv3}. We also note that it is possible to
''arrange'' on $\overset{\alpha }{\mathbf{V}}$ such nonholonomic
distributions when a d--connection $\ \ _{0}\overset{\alpha }{\mathbf{D}}%
=\{\ _{0}^{\alpha }\widetilde{\mathbf{\Gamma }}_{\ \alpha ^{\prime }\beta
^{\prime }}^{\gamma ^{\prime }}\}$ is described by constant matrix
coefficients, see details in \cite{vacap,vanco}, for integer dimensions, and %
\cite{bv3}, for fractional dimensions.

\section{Basic Equations for Fractional Curve Flows}

\label{s3} In symbolic, abstract index form, the constructions for
nonholonomic fractional spaces with correspondingly defined distributions
are similar to those for the Riemannian symmetric--spaces soldered to Klein
geometry of ''integer'' dimension. The fractional structure is encoded into
the local Caputo derivatives.

Following the introduced Cartan--Killing parametrizations, we analyze the
flow $\gamma (\tau ,\mathbf{l})$ of a non--stretching curve in $\mathbf{V}_{%
\mathbf{N}}=\mathbf{G}/SO(n)\oplus $ $SO(m)$ extended to $\overset{\alpha }{%
\mathbf{V}}_{\mathbf{N}}=\mathbf{G}/SO(n)\oplus $ $SO(m).$\footnote{%
We use an isomorphism between the real space $\mathfrak{so}(n)$ and the Lie
algebra of $n\times n$ skew--symmetric matrices. This allows us to establish
an isomorphism between $h\mathfrak{p}$ $\simeq \mathbb{R}^{n}$ and the
tangent spaces $T_{x}M=\mathfrak{so}(n+1)/$ $\mathfrak{so}(n)$ of the
Riemannian manifold $M=SO(n+1)/$ $SO(n)$ as described by the following
canonical decomposition
\begin{equation*}
h\mathfrak{g}=\mathfrak{so}(n+1)\supset h\mathfrak{p\in }\left[
\begin{array}{cc}
0 & h\mathbf{p} \\
-h\mathbf{p}^{T} & h\mathbf{0}%
\end{array}%
\right] \mbox{\ for\ }h\mathbf{0\in }h\mathfrak{h=so}(n)
\end{equation*}%
with $h\mathbf{p=\{}p^{i^{\prime }}\mathbf{\}\in }\mathbb{R}^{n}$ being the
h--component of the d--vector $\mathbf{p=(}p^{i^{\prime }}\mathbf{,}%
p^{a^{\prime }}\mathbf{)}$ and $h\mathbf{p}^{T}$ mean the transposition of
the row $h\mathbf{p.}$ \ In our approach, $T_{x}M\rightarrow \underline{T}%
_{x}M,$ with Caputo fractional derivatives. The Cartan--Killing inner
product on $h\mathfrak{g}$ is
\begin{eqnarray*}
h\mathbf{p\cdot }h\mathbf{p} &\mathbf{=}&\left\langle \left[
\begin{array}{cc}
0 & h\mathbf{p} \\
-h\mathbf{p}^{T} & h\mathbf{0}%
\end{array}%
\right] ,\left[
\begin{array}{cc}
0 & h\mathbf{p} \\
-h\mathbf{p}^{T} & h\mathbf{0}%
\end{array}%
\right] \right\rangle \\
&\mathbf{\doteqdot }&\frac{1}{2}tr\left\{ \left[
\begin{array}{cc}
0 & h\mathbf{p} \\
-h\mathbf{p}^{T} & h\mathbf{0}%
\end{array}%
\right] ^{T}\left[
\begin{array}{cc}
0 & h\mathbf{p} \\
-h\mathbf{p}^{T} & h\mathbf{0}%
\end{array}%
\right] \right\} ,
\end{eqnarray*}%
where $tr$ denotes the trace of product of matrices. This product identifies
canonically $h\mathfrak{p}$ $\simeq \mathbb{R}^{n}$ with its dual $h%
\mathfrak{p}^{\ast }$ $\simeq \mathbb{R}^{n}.$ In a similar form, we can
consider
\begin{equation*}
v\mathfrak{g}=\mathfrak{so}(m+1)\supset v\mathfrak{p\in }\left[
\begin{array}{cc}
0 & v\mathbf{p} \\
-v\mathbf{p}^{T} & v\mathbf{0}%
\end{array}%
\right] \mbox{\ for\ }v\mathbf{0\in }v\mathfrak{h=so}(m)
\end{equation*}%
with $v\mathbf{p=\{}p^{a^{\prime }}\mathbf{\}\in }\mathbb{R}^{m}$ being the
v--component of the d--vector $\mathbf{p=(}p^{i^{\prime }}\mathbf{,}%
p^{a^{\prime }}\mathbf{)}$ and define the Cartan--Killing inner product $v%
\mathbf{p\cdot }v\mathbf{p\doteqdot }\frac{1}{2}tr\{...\}.$ In general, we
can consider the Cartan--Killing N--adapted inner product $\mathbf{p\cdot p=}%
h\mathbf{p\cdot }h\mathbf{p+}v\mathbf{p\cdot }v\mathbf{p.}$} This extension
is defined via a coframe $\ ^{\alpha }\mathbf{e}\in \underline{T}_{\gamma
}^{\ast }\mathbf{V}_{\mathbf{N}}\otimes (h\mathfrak{p\oplus }v\mathfrak{p}),$
which is a N--adapted $\left( SO(n)\mathfrak{\oplus }SO(m)\right) $%
--parallel basis along $\gamma ,$ and its associated canonical
d--con\-nec\-tion 1--form $\ ^{\alpha }\mathbf{\Gamma }\in \underline{T}%
_{\gamma }^{\ast }\mathbf{V}_{\mathbf{N}}\otimes (\mathfrak{so}(n)\mathfrak{%
\oplus so}(m)).$ Such d--objects are parametrized:%
\begin{equation*}
\ ^{\alpha }\mathbf{e}_{\mathbf{X}}=\ ^{\alpha }\mathbf{e}_{h\mathbf{X}}+\
^{\alpha }\mathbf{e}_{v\mathbf{X}},
\end{equation*}%
where (for $(1,\overrightarrow{0})\in \mathbb{R}^{n},\overrightarrow{0}\in
\mathbb{R}^{n-1}$ and $(1,\overleftarrow{0})\in \mathbb{R}^{m},%
\overleftarrow{0}\in \mathbb{R}^{m-1}),$
\begin{equation*}
\ ^{\alpha }\mathbf{e}_{h\mathbf{X}}=\gamma _{h\mathbf{X}}\rfloor h\
^{\alpha }\mathbf{e=}\left[
\begin{array}{cc}
0 & (1,\overrightarrow{0}) \\
-(1,\overrightarrow{0})^{T} & h\mathbf{0}%
\end{array}%
\right]
\end{equation*}%
and
\begin{equation*}
\ ^{\alpha }\mathbf{e}_{v\mathbf{X}}=\gamma _{v\mathbf{X}}\rfloor v\
^{\alpha }\mathbf{e=}\left[
\begin{array}{cc}
0 & (1,\overleftarrow{0}) \\
-(1,\overleftarrow{0})^{T} & v\mathbf{0}%
\end{array}%
\right] .
\end{equation*}
We also introduce%
\begin{equation*}
\ ^{\alpha }\mathbf{\Gamma =}\left[ \mathbf{\Gamma }_{h\mathbf{X}},\mathbf{%
\Gamma }_{v\mathbf{X}}\right] ,
\end{equation*}%
for
\begin{equation*}
\mathbf{\Gamma }_{h\mathbf{X}}\mathbf{=}\gamma _{h\mathbf{X}}\rfloor \mathbf{%
L=}\left[
\begin{array}{cc}
0 & (0,\overrightarrow{0}) \\
-(0,\overrightarrow{0})^{T} & \mathbf{L}%
\end{array}%
\right] \in \mathfrak{so}(n+1),
\end{equation*}%
where $\mathbf{L=}\left[
\begin{array}{cc}
0 & \overrightarrow{v} \\
-\overrightarrow{v}^{T} & h\mathbf{0}%
\end{array}%
\right] \in \mathfrak{so}(n),~\overrightarrow{v}\in \mathbb{R}^{n-1},~h%
\mathbf{0\in }\mathfrak{so}(n-1),$ and
\begin{equation*}
\mathbf{\Gamma }_{v\mathbf{X}}\mathbf{=}\gamma _{v\mathbf{X}}\rfloor \mathbf{%
C=}\left[
\begin{array}{cc}
0 & (0,\overleftarrow{0}) \\
-(0,\overleftarrow{0})^{T} & \mathbf{C}%
\end{array}%
\right] \in \mathfrak{so}(m+1),
\end{equation*}%
where $\mathbf{C=}\left[
\begin{array}{cc}
0 & \overleftarrow{v} \\
-\overleftarrow{v}^{T} & v\mathbf{0}%
\end{array}%
\right] \in \mathfrak{so}(m),~\overleftarrow{v}\in \mathbb{R}^{m-1},~v%
\mathbf{0\in }\mathfrak{so}(m-1).$

There are decompositions of horizontal $SO(n+1)/$ $SO(n)$ matrices,
\begin{eqnarray*}
h\mathfrak{p} &\mathfrak{\ni }&\left[
\begin{array}{cc}
0 & h\mathbf{p} \\
-h\mathbf{p}^{T} & h\mathbf{0}%
\end{array}%
\right] =\left[
\begin{array}{cc}
0 & \left( h\mathbf{p}_{\parallel },\overrightarrow{0}\right) \\
-\left( h\mathbf{p}_{\parallel },\overrightarrow{0}\right) ^{T} & h\mathbf{0}%
\end{array}%
\right] \\
&&+\left[
\begin{array}{cc}
0 & \left( 0,h\overrightarrow{\mathbf{p}}_{\perp }\right) \\
-\left( 0,h\overrightarrow{\mathbf{p}}_{\perp }\right) ^{T} & h\mathbf{0}%
\end{array}%
\right] ,
\end{eqnarray*}%
into tangential and normal parts relative to $\ ^{\alpha }\mathbf{e}_{h%
\mathbf{X}}$ via corresponding decompositions of h--vectors $h\mathbf{p=(}h%
\mathbf{\mathbf{p}_{\parallel },}h\mathbf{\overrightarrow{\mathbf{p}}_{\perp
})\in }\mathbb{R}^{n}$ relative to $\left( 1,\overrightarrow{0}\right) ,$
when $h\mathbf{\mathbf{p}_{\parallel }}$ is identified with $h\mathfrak{p}%
_{C}$ and $h\mathbf{\overrightarrow{\mathbf{p}}_{\perp }}$ is identified
with $h\mathfrak{p}_{\perp }=h\mathfrak{p}_{C^{\perp }}.$ In a similar form,
it is possible to decompose vertical $SO(m+1)/$ $SO(m)$ matrices,
\begin{eqnarray*}
v\mathfrak{p} &\mathfrak{\ni }&\left[
\begin{array}{cc}
0 & v\mathbf{p} \\
-v\mathbf{p}^{T} & v\mathbf{0}%
\end{array}%
\right] =\left[
\begin{array}{cc}
0 & \left( v\mathbf{p}_{\parallel },\overleftarrow{0}\right) \\
-\left( v\mathbf{p}_{\parallel },\overleftarrow{0}\right) ^{T} & v\mathbf{0}%
\end{array}%
\right] \\
&&+\left[
\begin{array}{cc}
0 & \left( 0,v\overleftarrow{\mathbf{p}}_{\perp }\right) \\
-\left( 0,v\overleftarrow{\mathbf{p}}_{\perp }\right) ^{T} & v\mathbf{0}%
\end{array}%
\right] ,
\end{eqnarray*}%
into tangential and normal parts relative to $\ ^{\alpha }\mathbf{e}_{v%
\mathbf{X}}$ via corresponding decompositions of h--vectors $v\mathbf{p=(}v%
\mathbf{\mathbf{p}_{\parallel },}v\overleftarrow{\mathbf{\mathbf{p}}}\mathbf{%
_{\perp })\in }\mathbb{R}^{m}$ relative to $\left( 1,\overleftarrow{0}%
\right) ,$ when $v\mathbf{\mathbf{p}_{\parallel }}$ is identified with $v%
\mathfrak{p}_{C}$ and $v\overleftarrow{\mathbf{\mathbf{p}}}\mathbf{_{\perp }}
$ is identified with $v\mathfrak{p}_{\perp }=v\mathfrak{p}_{C^{\perp }}.$
Locally, we consider, for instance, instead of $\mathbb{R}^{n}$ the
fractional space $\ ^{\alpha }\mathbb{R}^{n}$ local (co) vectors defined by
Caputo fractional derivatives and their duals.

The canonical d--connection $\ ^{\alpha }\widehat{\mathbf{D}}=\{\ ^{\alpha }%
\widehat{\mathbf{\Gamma }}_{\ \beta \gamma }^{\tau }\}$ \cite{bv3} induces
matrices decomposed with respect to the fractional flow direction. In the
h--direction, we parametrize%
\begin{equation*}
\ ^{\alpha }\mathbf{e}_{h\mathbf{Y}}=\gamma _{\tau }\rfloor h\ ^{\alpha }%
\mathbf{e=}\left[
\begin{array}{cc}
0 & \left( h\mathbf{e}_{\parallel },h\overrightarrow{\mathbf{e}}_{\perp
}\right) \\
-\left( h\mathbf{e}_{\parallel },h\overrightarrow{\mathbf{e}}_{\perp
}\right) ^{T} & h\mathbf{0}%
\end{array}%
\right] ,
\end{equation*}%
when $\ ^{\alpha }\mathbf{e}_{h\mathbf{Y}}\in h\mathfrak{p,}\left( h\mathbf{e%
}_{\parallel },h\overrightarrow{\mathbf{e}}_{\perp }\right) \in \ ^{\alpha }%
\mathbb{R}^{n}$ and $h\overrightarrow{\mathbf{e}}_{\perp }\in \ ^{\alpha }%
\mathbb{R}^{n-1},$ and
\begin{equation}
\mathbf{\Gamma }_{h\mathbf{Y}}\mathbf{=}\gamma _{h\mathbf{Y}}\rfloor \mathbf{%
L=}\left[
\begin{array}{cc}
0 & (0,\overrightarrow{0}) \\
-(0,\overrightarrow{0})^{T} & h\mathbf{\varpi }_{\tau }%
\end{array}%
\right] \in \mathfrak{so}(n+1),  \label{auxaaa}
\end{equation}%
where $\ \ h\mathbf{\varpi }_{\tau }\mathbf{=}\left[
\begin{array}{cc}
0 & \overrightarrow{\varpi } \\
-\overrightarrow{\varpi }^{T} & h\mathbf{\Theta }%
\end{array}%
\right] \in \mathfrak{so}(n),~\overrightarrow{\varpi }\in \ ^{\alpha }%
\mathbb{R}^{n-1},~h\mathbf{\Theta \in }\mathfrak{so}(n-1).$

In the v--direction, we have
\begin{equation*}
\mathbf{e}_{v\mathbf{Y}}=\gamma _{\tau }\rfloor v\mathbf{e=}\left[
\begin{array}{cc}
0 & \left( v\mathbf{e}_{\parallel },v\overleftarrow{\mathbf{e}}_{\perp
}\right) \\
-\left( v\mathbf{e}_{\parallel },v\overleftarrow{\mathbf{e}}_{\perp }\right)
^{T} & v\mathbf{0}%
\end{array}%
\right] ,
\end{equation*}%
when $\mathbf{e}_{v\mathbf{Y}}\in v\mathfrak{p,}\left( v\mathbf{e}%
_{\parallel },v\overleftarrow{\mathbf{e}}_{\perp }\right) \in \ ^{\alpha }%
\mathbb{R}^{m}$ and $v\overleftarrow{\mathbf{e}}_{\perp }\in \ ^{\alpha }%
\mathbb{R}^{m-1},$ and
\begin{equation*}
\mathbf{\Gamma }_{v\mathbf{Y}}\mathbf{=}\gamma _{v\mathbf{Y}}\rfloor \mathbf{%
C=}\left[
\begin{array}{cc}
0 & (0,\overleftarrow{0}) \\
-(0,\overleftarrow{0})^{T} & v\mathbf{\varpi }_{\tau }%
\end{array}%
\right] \in \mathfrak{so}(m+1),
\end{equation*}%
where $v\mathbf{\varpi }_{\tau }\mathbf{=}\left[
\begin{array}{cc}
0 & \overleftarrow{\varpi } \\
-\overleftarrow{\varpi }^{T} & v\mathbf{\Theta }%
\end{array}%
\right] \in \mathfrak{so}(m),~\overleftarrow{\varpi }\in \ ^{\alpha }\mathbb{%
R}^{m-1},~v\mathbf{\Theta \in }\mathfrak{so}(m-1).$

The components $h\mathbf{e}_{\parallel }$ and $h\overrightarrow{\mathbf{e}}%
_{\perp }$ correspond to the decomposition
\begin{equation*}
\ ^{\alpha }\mathbf{e}_{h\mathbf{Y}}=h\mathbf{g(\gamma }_{\tau },\mathbf{%
\gamma }_{\mathbf{l}}\mathbf{)}\ ^{\alpha }\mathbf{e}_{h\mathbf{X}}+\mathbf{%
(\gamma }_{\tau })_{\perp }\rfloor h\mathbf{e}_{\perp }
\end{equation*}%
into tangential and normal parts relative to $\ ^{\alpha }\mathbf{e}_{h%
\mathbf{X}}.$ In a similar form, one considers $v\mathbf{e}_{\parallel }$
and $v\overleftarrow{\mathbf{e}}_{\perp }$ corresponding to the
decomposition
\begin{equation*}
\ ^{\alpha }\mathbf{e}_{v\mathbf{Y}}=v\mathbf{g(\gamma }_{\tau },\mathbf{%
\gamma }_{\mathbf{l}}\mathbf{)}\ ^{\alpha }\mathbf{e}_{v\mathbf{X}}+\mathbf{%
(\gamma }_{\tau })_{\perp }\rfloor v\mathbf{e}_{\perp }.
\end{equation*}%
Working with such matrix parametrizations, we define%
\begin{eqnarray}
\left[ \ ^{\alpha }\mathbf{e}_{h\mathbf{X}},\ ^{\alpha }\mathbf{e}_{h\mathbf{%
Y}}\right] &=&-\left[
\begin{array}{cc}
0 & 0 \\
0 & h\mathbf{e}_{\perp }%
\end{array}%
\right] \in \mathfrak{so}(n+1),  \label{aux41} \\
\mbox{ \ for \ }h\mathbf{e}_{\perp } &=&\left[
\begin{array}{cc}
0 & h\overrightarrow{\mathbf{e}}_{\perp } \\
-(h\overrightarrow{\mathbf{e}}_{\perp })^{T} & h\mathbf{0}%
\end{array}%
\right] \in \mathfrak{so}(n);  \notag \\
\left[ \mathbf{\Gamma }_{h\mathbf{Y}},\ ^{\alpha }\mathbf{e}_{h\mathbf{Y}}%
\right] &=&-\left[
\begin{array}{cc}
0 & \left( 0,\overrightarrow{\varpi }\right) \\
-\left( 0,\overrightarrow{\varpi }\right) ^{T} & 0%
\end{array}%
\right] \in h\mathfrak{p}_{\perp };  \notag \\
\left[ \mathbf{\Gamma }_{h\mathbf{X}},\ ^{\alpha }\mathbf{e}_{h\mathbf{Y}}%
\right] &=&-\left[
\begin{array}{cc}
0 & \left( -\overrightarrow{v}\cdot h\overrightarrow{\mathbf{e}}_{\perp },h%
\mathbf{e}_{\parallel }\overrightarrow{v}\right) \\
-\left( -\overrightarrow{v}\cdot h\overrightarrow{\mathbf{e}}_{\perp },h%
\mathbf{e}_{\parallel }\overrightarrow{v}\right) ^{T} & h\mathbf{0}%
\end{array}%
\right]  \notag \\
& & \in h\mathfrak{p};  \notag
\end{eqnarray}%
and
\begin{eqnarray}
\left[ \ ^{\alpha }\mathbf{e}_{v\mathbf{X}},\ ^{\alpha }\mathbf{e}_{v\mathbf{%
Y}}\right] &=&-\left[
\begin{array}{cc}
0 & 0 \\
0 & v\mathbf{e}_{\perp }%
\end{array}%
\right] \in \mathfrak{so}(m+1),  \label{aux41a} \\
\mbox{ \ for \ }v\mathbf{e}_{\perp } &=&\left[
\begin{array}{cc}
0 & v\overrightarrow{\mathbf{e}}_{\perp } \\
-(v\overrightarrow{\mathbf{e}}_{\perp })^{T} & v\mathbf{0}%
\end{array}%
\right] \in \mathfrak{so}(m);  \notag \\
\left[ \mathbf{\Gamma }_{v\mathbf{Y}},\ ^{\alpha }\mathbf{e}_{v\mathbf{Y}}%
\right] &=&-\left[
\begin{array}{cc}
0 & \left( 0,\overleftarrow{\varpi }\right) \\
-\left( 0,\overleftarrow{\varpi }\right) ^{T} & 0%
\end{array}%
\right] \in v\mathfrak{p}_{\perp };  \notag \\
\left[ \mathbf{\Gamma }_{v\mathbf{X}},\ ^{\alpha }\mathbf{e}_{v\mathbf{Y}}%
\right] &=&-\left[
\begin{array}{cc}
0 & \left( -\overleftarrow{v}\cdot v\overleftarrow{\mathbf{e}}_{\perp },v%
\mathbf{e}_{\parallel }\overleftarrow{v}\right) \\
-\left( -\overleftarrow{v}\cdot v\overleftarrow{\mathbf{e}}_{\perp },v%
\mathbf{e}_{\parallel }\overleftarrow{v}\right) ^{T} & v\mathbf{0}%
\end{array}%
\right]  \notag \\
&&\in v\mathfrak{p}.  \notag
\end{eqnarray}

We use formulas (\ref{aux41}) and (\ref{aux41a}) in order to write the
structure equations in terms of N--adapted curve fractional flow operators
soldered to the geometry Klein N--anholonomic spaces, the formulas are
''fractional'' extensions of the respective ones in Refs. \cite{vacap,vanco}%
. This way, it is possible to construct respectively the $\mathbf{G}$%
--invariant N--adapted torsion and curvature generated by the fractional
canonical d--connection,
\begin{eqnarray}
\ ^{\alpha }\widehat{\mathbf{T}}(\gamma _{\tau },\gamma _{\mathbf{l}})
&=&\left( \ ^{\alpha }\widehat{\mathbf{D}}_{\mathbf{X}}\gamma _{\tau }-\
^{\alpha }\widehat{\mathbf{D}}_{\mathbf{Y}}\gamma _{\mathbf{l}}\right)
\rfloor \ ^{\alpha }\mathbf{e}  \label{torscf} \\
&\mathbf{=}&\ ^{\alpha }\widehat{\mathbf{D}}_{\mathbf{X}}\ ^{\alpha }\mathbf{%
e}_{\mathbf{Y}}-\ ^{\alpha }\widehat{\mathbf{D}}_{\mathbf{Y}}\ ^{\alpha }%
\mathbf{e}_{\mathbf{X}}+\left[ \ ^{\alpha }\widehat{\mathbf{\Gamma }}_{%
\mathbf{X}},\ ^{\alpha }\mathbf{e}_{\mathbf{Y}}\right] -\left[ \ ^{\alpha }%
\widehat{\mathbf{\Gamma }}_{\mathbf{Y}},\ ^{\alpha }\mathbf{e}_{\mathbf{X}}%
\right]  \notag
\end{eqnarray}%
and
\begin{eqnarray}
\ ^{\alpha }\widehat{\mathbf{R}}(\gamma _{\tau },\gamma _{\mathbf{l}})\
^{\alpha }\mathbf{e} &\mathbf{=}&\left[ \ ^{\alpha }\widehat{\mathbf{D}}_{%
\mathbf{X}},\ ^{\alpha }\widehat{\mathbf{D}}_{\mathbf{Y}}\right] \ ^{\alpha }%
\mathbf{e}  \label{curvcf} \\
&\mathbf{=}&\ ^{\alpha }\widehat{\mathbf{D}}_{\mathbf{X}}\ ^{\alpha }%
\widehat{\mathbf{\Gamma }}_{\mathbf{Y}}-\ ^{\alpha }\widehat{\mathbf{D}}_{%
\mathbf{Y}}\ ^{\alpha }\widehat{\mathbf{\Gamma }}_{\mathbf{X}}+\left[ \
^{\alpha }\widehat{\mathbf{\Gamma }}_{\mathbf{X}},\ ^{\alpha }\widehat{%
\mathbf{\Gamma }}_{\mathbf{Y}}\right]  \notag
\end{eqnarray}%
where $\ ^{\alpha }\mathbf{e}_{\mathbf{X}}\doteqdot \gamma _{\mathbf{l}%
}\rfloor \ ^{\alpha }\mathbf{e,}$ $\ ^{\alpha }\mathbf{e}_{\mathbf{Y}%
}\doteqdot \gamma _{\mathbf{\tau }}\rfloor \ ^{\alpha }\mathbf{e,}$ $\
^{\alpha }\widehat{\mathbf{\Gamma }}_{\mathbf{X}}\doteqdot \gamma _{\mathbf{l%
}}\rfloor \ ^{\alpha }\widehat{\mathbf{\Gamma }}$ and $\ ^{\alpha }\widehat{%
\mathbf{\Gamma }}_{\mathbf{Y}}\doteqdot \gamma _{\mathbf{\tau }}\rfloor \
^{\alpha }\widehat{\mathbf{\Gamma }}\mathbf{.}$

Applying a d--connection $\ ^{\alpha }\mathbf{D}$ (in particular, we can
take $\ ^{\alpha }\widehat{\mathbf{D}}$) instead of the Levi--Civita one $\
^{\alpha }\nabla ,$ we get
\begin{eqnarray}
0 &=&\left( \ ^{\alpha }\mathbf{D}_{h\mathbf{X}}\gamma _{\tau }-\ ^{\alpha }%
\mathbf{D}_{h\mathbf{Y}}\gamma _{\mathbf{l}}\right) \rfloor h\ ^{\alpha }%
\mathbf{e}  \label{torseq} \\
&\mathbf{=}&\ ^{\alpha }\mathbf{D}_{h\mathbf{X}}\ ^{\alpha }\mathbf{e}_{h%
\mathbf{Y}}-\ ^{\alpha }\mathbf{D}_{h\mathbf{Y}}\ ^{\alpha }\mathbf{e}_{h%
\mathbf{X}}+\left[ \ ^{\alpha }\mathbf{L}_{h\mathbf{X}},\ ^{\alpha }\mathbf{e%
}_{h\mathbf{Y}}\right] -\left[ \ ^{\alpha }\mathbf{L}_{h\mathbf{Y}},\
^{\alpha }\mathbf{e}_{h\mathbf{X}}\right] ;  \notag \\
0 &=&\left( \ ^{\alpha }\mathbf{D}_{v\mathbf{X}}\gamma _{\tau }-\ ^{\alpha }%
\mathbf{D}_{v\mathbf{Y}}\gamma _{\mathbf{l}}\right) \rfloor v\ ^{\alpha }%
\mathbf{e}  \notag \\
&\mathbf{=}&\ ^{\alpha }\mathbf{D}_{v\mathbf{X}}\ ^{\alpha }\mathbf{e}_{v%
\mathbf{Y}}-\ ^{\alpha }\mathbf{D}_{v\mathbf{Y}}\ ^{\alpha }\mathbf{e}_{v%
\mathbf{X}}+\left[ \ ^{\alpha }\mathbf{C}_{v\mathbf{X}},\ ^{\alpha }\mathbf{e%
}_{v\mathbf{Y}}\right] -\left[ \ ^{\alpha }\mathbf{C}_{v\mathbf{Y}},\
^{\alpha }\mathbf{e}_{v\mathbf{X}}\right] ,  \notag \\
&&h\ ^{\alpha }\mathbf{R}(\gamma _{\tau },\gamma _{\mathbf{l}})h\ ^{\alpha }%
\mathbf{e} =\left[ \ ^{\alpha }\mathbf{D}_{h\mathbf{X}},\ ^{\alpha }\mathbf{D%
}_{h\mathbf{Y}}\right] h\ ^{\alpha }\mathbf{e}  \notag \\
&\mathbf{=}&\ ^{\alpha }\mathbf{D}_{h\mathbf{X}}\ ^{\alpha }\mathbf{L}_{h%
\mathbf{Y}}-\ ^{\alpha }\mathbf{D}_{h\mathbf{Y}}\ ^{\alpha }\mathbf{L}_{h%
\mathbf{X}}+\left[ \ ^{\alpha }\mathbf{L}_{h\mathbf{X}},\ ^{\alpha }\mathbf{L%
}_{h\mathbf{Y}}\right]  \notag \\
&&v\ ^{\alpha }\mathbf{R}(\gamma _{\tau },\gamma _{\mathbf{l}})v\ ^{\alpha }%
\mathbf{e} =\left[ \ ^{\alpha }\mathbf{D}_{v\mathbf{X}},\ ^{\alpha }\mathbf{D%
}_{v\mathbf{Y}}\right] v\ ^{\alpha }\mathbf{e}  \notag \\
&\mathbf{=}&\ ^{\alpha }\mathbf{D}_{v\mathbf{X}}\ ^{\alpha }\mathbf{C}_{v%
\mathbf{Y}}-\ ^{\alpha }\mathbf{D}_{v\mathbf{Y}}\ ^{\alpha }\mathbf{C}_{v%
\mathbf{X}}+\left[ \ ^{\alpha }\mathbf{C}_{v\mathbf{X}},\ ^{\alpha }\mathbf{C%
}_{v\mathbf{Y}}\right] .  \notag
\end{eqnarray}

Following N--adapted curve flow parametrizations (\ref{aux41}) and (\ref%
{aux41a}), the equations (\ref{torseq}) are written
\begin{eqnarray}
&&0=\ ^{\alpha }\mathbf{D}_{h\mathbf{X}}h\mathbf{e}_{\parallel }+%
\overrightarrow{v}\cdot h\overrightarrow{\mathbf{e}}_{\perp },~0=\ ^{\alpha }%
\mathbf{D}_{v\mathbf{X}}v\mathbf{e}_{\parallel }+\overleftarrow{v}\cdot v%
\overleftarrow{\mathbf{e}}_{\perp };  \label{torseqd} \\
&&0=\overrightarrow{\varpi }-h\mathbf{e}_{\parallel }\overrightarrow{v}+\
^{\alpha }\mathbf{D}_{h\mathbf{X}}h\overrightarrow{\mathbf{e}}_{\perp },~0=%
\overleftarrow{\varpi }-v\mathbf{e}_{\parallel }\overleftarrow{v}+\ ^{\alpha
}\mathbf{D}_{v\mathbf{X}}v\overleftarrow{\mathbf{e}}_{\perp };  \notag \\
&&\ ^{\alpha }\mathbf{D}_{h\mathbf{X}}\overrightarrow{\varpi }-\ ^{\alpha }%
\mathbf{D}_{h\mathbf{Y}}\overrightarrow{v}+\overrightarrow{v}\rfloor h%
\mathbf{\Theta }=h\overrightarrow{\mathbf{e}}_{\perp },  \notag \\
&&\ ^{\alpha }\mathbf{D}_{v\mathbf{X}}\overleftarrow{\varpi }-\ ^{\alpha }%
\mathbf{D}_{v\mathbf{Y}}\overleftarrow{v}+\overleftarrow{v}\rfloor v\mathbf{%
\Theta =}v\overleftarrow{\mathbf{e}}_{\perp };  \notag \\
&&\ ^{\alpha }\mathbf{D}_{h\mathbf{X}}h\mathbf{\Theta -}\overrightarrow{v}%
\otimes \overrightarrow{\varpi }+\overrightarrow{\varpi }\otimes
\overrightarrow{v}=0,  \notag \\
&& \ ^{\alpha }\mathbf{D}_{v\mathbf{X}}v\mathbf{\Theta -}\overleftarrow{v}%
\otimes \overleftarrow{\varpi }+\overleftarrow{\varpi }\otimes
\overleftarrow{v}=0.  \notag
\end{eqnarray}%
For such fractional spaces, the tensor and interior products, for instance,
for the h--components, are defined in the form: $\otimes $ denotes the outer
product of pairs of vectors ($1\times n$ row matrices), producing $n\times n$
matrices $\overrightarrow{A}\otimes \overrightarrow{B}=\overrightarrow{A}^{T}%
\overrightarrow{B},$ and $\rfloor $ denotes multiplication of $n\times n$
matrices on vectors ($1\times n$ row matrices); one holds the properties $%
\overrightarrow{A}\rfloor \left( \overrightarrow{B}\otimes \overrightarrow{C}%
\right) =\left( \overrightarrow{A}\cdot \overrightarrow{B}\right)
\overrightarrow{C}$ which is the transpose of the standard matrix product on
column vectors, and $\left( \overrightarrow{B}\otimes \overrightarrow{C}%
\right) \overrightarrow{A}=\left( \overrightarrow{C}\cdot \overrightarrow{A}%
\right) \overrightarrow{B}.$ As basic vectors, we use the Caputo fractional
derivatives. Similar formulas hold for the v--components but, for instance,
we have to change, correspondingly, $n\rightarrow m$ and $\overrightarrow{A}%
\rightarrow \overleftarrow{A};$ for such constructions the fractional
differentials have to be used.

\begin{lemma}
On nonholonomic fractional manifolds with constant curvature matrix
coefficients for a d--connection, there are N--adapted fractional
Hamiltonian symplectic operators,
\begin{equation}
h\ ^{\alpha }\mathcal{J}=\ ^{\alpha }\mathbf{D}_{h\mathbf{X}}+\ ^{\alpha }%
\mathbf{D}_{h\mathbf{X}}^{-1}\left( \overrightarrow{v}\cdot \right)
\overrightarrow{v}\mbox{ \ and \ }v\ ^{\alpha }\mathcal{J}=\ ^{\alpha }%
\mathbf{D}_{v\mathbf{X}}+\ ^{\alpha }\mathbf{D}_{v\mathbf{X}}^{-1}\left(
\overleftarrow{v}\cdot \right) \overleftarrow{v},  \label{sop}
\end{equation}%
and cosymplectic operators%
\begin{equation}
h\ ^{\alpha }\mathcal{H}\doteqdot \ ^{\alpha }\mathbf{D}_{h\mathbf{X}}+%
\overrightarrow{v}\rfloor \ ^{\alpha }\mathbf{D}_{h\mathbf{X}}^{-1}\left(
\overrightarrow{v}\wedge \right) \mbox{
\ and \ }v\ ^{\alpha }\mathcal{H}\doteqdot \ ^{\alpha }\mathbf{D}_{v\mathbf{X%
}}+\overleftarrow{v}\rfloor \ ^{\alpha }\mathbf{D}_{v\mathbf{X}}^{-1}\left(
\overleftarrow{v}\wedge \right) ,  \label{csop}
\end{equation}%
where, for instance, $\overrightarrow{A}\wedge \overrightarrow{B}=%
\overrightarrow{A}\otimes \overrightarrow{B}-\overrightarrow{B}\otimes $ $%
\overrightarrow{A}.$\
\end{lemma}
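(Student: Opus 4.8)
The plan is to read the operators \eqref{sop} and \eqref{csop} off the N--adapted structure equations \eqref{torseqd} and then check the abstract properties — skew--adjointness and the Jacobi/closure identity — that make them a compatible symplectic/cosymplectic (Hamiltonian) pair. I treat the h--components; the v--components follow verbatim under $n\to m,\ \overrightarrow{A}\to \overleftarrow{A}$ and the corresponding fractional differentials.

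First I would eliminate the parallel components. From the first equation in \eqref{torseqd}, $0=\ ^{\alpha }\mathbf{D}_{h\mathbf{X}}h\mathbf{e}_{\parallel }+\overrightarrow{v}\cdot h\overrightarrow{\mathbf{e}}_{\perp }$, one solves $h\mathbf{e}_{\parallel }=-\ ^{\alpha }\mathbf{D}_{h\mathbf{X}}^{-1}(\overrightarrow{v}\cdot h\overrightarrow{\mathbf{e}}_{\perp })$, where $\ ^{\alpha }\mathbf{D}_{h\mathbf{X}}^{-1}$ denotes the formal inverse of the Caputo--derivative operator along $\gamma$; this inverse is meaningful precisely because the Caputo derivative annihilates constants, so the non--stretching boundary data of the curve fix the integration constant. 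Substituting into the second equation, $0=\overrightarrow{\varpi }-h\mathbf{e}_{\parallel }\overrightarrow{v}+\ ^{\alpha }\mathbf{D}_{h\mathbf{X}}h\overrightarrow{\mathbf{e}}_{\perp }$, gives
\[
-\overrightarrow{\varpi }=\ ^{\alpha }\mathbf{D}_{h\mathbf{X}}h\overrightarrow{\mathbf{e}}_{\perp }+\ ^{\alpha }\mathbf{D}_{h\mathbf{X}}^{-1}\big(\overrightarrow{v}\cdot h\overrightarrow{\mathbf{e}}_{\perp }\big)\overrightarrow{v}=h\ ^{\alpha }\mathcal{J}\big(h\overrightarrow{\mathbf{e}}_{\perp }\big),
\]
which exhibits $h\ ^{\alpha }\mathcal{J}$ as the map carrying the normal flow component $h\overrightarrow{\mathbf{e}}_{\perp }$ to $-\overrightarrow{\varpi }$, i.e.\ from perp h--vectors to perp h--covectors, the shape of a symplectic operator. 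For the cosymplectic operator I would use the remaining h--equations: from $\ ^{\alpha }\mathbf{D}_{h\mathbf{X}}h\mathbf{\Theta }-\overrightarrow{v}\otimes \overrightarrow{\varpi }+\overrightarrow{\varpi }\otimes \overrightarrow{v}=0$ one has $h\mathbf{\Theta }=\ ^{\alpha }\mathbf{D}_{h\mathbf{X}}^{-1}(\overrightarrow{v}\wedge \overrightarrow{\varpi })$, and inserting this into $\ ^{\alpha }\mathbf{D}_{h\mathbf{X}}\overrightarrow{\varpi }-\ ^{\alpha }\mathbf{D}_{h\mathbf{Y}}\overrightarrow{v}+\overrightarrow{v}\rfloor h\mathbf{\Theta }=h\overrightarrow{\mathbf{e}}_{\perp }$ yields
\[
\ ^{\alpha }\mathbf{D}_{h\mathbf{Y}}\overrightarrow{v}=\Big(\ ^{\alpha }\mathbf{D}_{h\mathbf{X}}+\overrightarrow{v}\rfloor \ ^{\alpha }\mathbf{D}_{h\mathbf{X}}^{-1}\big(\overrightarrow{v}\wedge \ \cdot \ \big)\Big)\overrightarrow{\varpi }-h\overrightarrow{\mathbf{e}}_{\perp }=h\ ^{\alpha }\mathcal{H}\big(\overrightarrow{\varpi }\big)-h\overrightarrow{\mathbf{e}}_{\perp },
\]
so $h\ ^{\alpha }\mathcal{H}$ sends perp h--covectors back to perp h--vectors. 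Here the hypothesis of constant curvature matrix coefficients is exactly what guarantees that the curvature contributions in \eqref{torseq} collapse to the purely algebraic $\overrightarrow{v}\rfloor h\mathbf{\Theta }$ and $\overrightarrow{v}\wedge \overrightarrow{\varpi }$ terms with no arc--length--dependent coefficients, so that \eqref{torseqd} holds in the displayed form.

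Next I would verify the defining algebraic properties. Skew--adjointness with respect to the Cartan--Killing N--adapted inner product reduces to: (a) $\ ^{\alpha }\mathbf{D}_{h\mathbf{X}}$ is (formally) skew, by fractional integration by parts along $\gamma $ in which the Caputo derivative acts as a derivation with vanishing action on constants, so boundary terms drop for non--stretching data; and (b) the remaining pieces have the manifestly skew sandwich form $B^{\ast }\circ S\circ B$ — for $h\ ^{\alpha }\mathcal{J}$ with $B=(\overrightarrow{v}\cdot \,)$, adjoint multiplication by $\overrightarrow{v}$, and $S=\ ^{\alpha }\mathbf{D}_{h\mathbf{X}}^{-1}$ skew; for $h\ ^{\alpha }\mathcal{H}$ with $B=(\overrightarrow{v}\wedge \,)$, whose adjoint is $(\overrightarrow{v}\rfloor \,)$ by the contraction identity $\overrightarrow{A}\rfloor (\overrightarrow{B}\otimes \overrightarrow{C})=(\overrightarrow{A}\cdot \overrightarrow{B})\overrightarrow{C}$ recorded before the lemma, again with $S=\ ^{\alpha }\mathbf{D}_{h\mathbf{X}}^{-1}$. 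It then remains to confirm that $h\ ^{\alpha }\mathcal{J}$ defines a closed two--form (the symplectic condition) and $h\ ^{\alpha }\mathcal{H}$ a Poisson bivector (the Jacobi identity), i.e.\ the vanishing of the relevant cyclic trilinear expressions; I would do this by the standard direct computation, using only the Leibniz rule for $\ ^{\alpha }\mathbf{D}_{h\mathbf{X}}$, its skew--adjointness, and the $\otimes ,\ \rfloor ,\ \wedge $ identities — none of these steps sees the order $\alpha $, so the verification is word--for--word the integer--dimensional one of \cite{vacap,vanco}.

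I expect the real obstacle to be two intertwined analytic points rather than the algebra: first, pinning down the function space on which $\ ^{\alpha }\mathbf{D}_{h\mathbf{X}}$ is invertible and skew--adjoint, since Caputo derivatives are nonlocal and the naive adjoint is a right--derivative — here the paper's standing use of Caputo derivatives (vanishing on constants, so that $\ ^{\alpha }\mathbf{D}_{h\mathbf{X}}^{-1}$ is defined modulo constants, with the non--stretching curve killing that ambiguity) is precisely what makes the formal manipulations legitimate; and second, checking that the closure/Jacobi computation truly proceeds unchanged, i.e.\ that the fractional Leibniz rule enters only in the same formal way ordinary differentiation does in the integer proof. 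Granting these, assembling the h-- and v--statements and reading off \eqref{sop}, \eqref{csop} is immediate.
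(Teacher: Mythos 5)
Your proposal follows essentially the same route as the paper's own (sketched) proof: eliminate $h\mathbf{e}_{\parallel}$ and $h\mathbf{\Theta}$ from the structure equations (\ref{torseqd}) via $\ ^{\alpha}\mathbf{D}_{h\mathbf{X}}^{-1}$, read off $h\ ^{\alpha}\mathcal{J}$ and $h\ ^{\alpha}\mathcal{H}$ from the resulting expressions for $\overrightarrow{\varpi}$ and the flow on $\overrightarrow{v}$, and repeat for the v--sector. You actually go further than the paper, which stops at deriving the operators and silently defers the skew--adjointness and Jacobi--identity verifications to the integer--dimensional references \cite{vacap,vanco}; your explicit identification of the sandwich form $B^{\ast}\circ S\circ B$ and of the analytic role of the Caputo derivative (annihilation of constants fixing $\ ^{\alpha}\mathbf{D}_{h\mathbf{X}}^{-1}$) is a welcome completion, and the sign discrepancies you implicitly correct relative to (\ref{floweq}) are present in the paper itself.
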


\begin{proof}
We sketch some key steps of the proof. The variables $\mathbf{e}_{\parallel
} $ and $\mathbf{\Theta ,}$ written in h-- and v--components, can be
expressed correspondingly in terms of variables $\overrightarrow{v},%
\overrightarrow{\varpi },h\overrightarrow{\mathbf{e}}_{\perp }$ and $%
\overleftarrow{v},\overleftarrow{\varpi },v\overleftarrow{\mathbf{e}}_{\perp
}$ (see equations (\ref{torseqd})),%
\begin{equation*}
h\mathbf{e}_{\parallel }=-\ ^{\alpha }\mathbf{D}_{h\mathbf{X}}^{-1}(%
\overrightarrow{v}\cdot h\overrightarrow{\mathbf{e}}_{\perp }),~v\mathbf{e}%
_{\parallel }=-\ ^{\alpha }\mathbf{D}_{v\mathbf{X}}^{-1}(\overleftarrow{v}%
\cdot v\overleftarrow{\mathbf{e}}_{\perp }),
\end{equation*}%
and $\ h\mathbf{\Theta =}\ ^{\alpha }\mathbf{D}_{h\mathbf{X}}^{-1}\left(
\overrightarrow{v}\otimes \overrightarrow{\varpi }-\overrightarrow{\varpi }%
\otimes \overrightarrow{v}\right) ,~v\mathbf{\Theta =}\ ^{\alpha }\mathbf{D}%
_{v\mathbf{X}}^{-1}\left( \overleftarrow{v}\otimes \overleftarrow{\varpi }-%
\overleftarrow{\varpi }\otimes \overleftarrow{v}\right) .$ Substituting
these values, respectively, in equations in (\ref{torseqd}), we express
\begin{eqnarray*}
\overrightarrow{\varpi }&=&-\ ^{\alpha }\mathbf{D}_{h\mathbf{X}}h%
\overrightarrow{\mathbf{e}}_{\perp }-\ ^{\alpha }\mathbf{D}_{h\mathbf{X}%
}^{-1}(\overrightarrow{v}\cdot h\overrightarrow{\mathbf{e}}_{\perp })%
\overrightarrow{v}, \\
\overleftarrow{\varpi }&=&-\ ^{\alpha }\mathbf{D}_{v\mathbf{X}}v%
\overleftarrow{\mathbf{e}}_{\perp }-\ ^{\alpha }\mathbf{D}_{v\mathbf{X}%
}^{-1}(\overleftarrow{v}\cdot v\overleftarrow{\mathbf{e}}_{\perp })%
\overleftarrow{v},
\end{eqnarray*}%
contained in the h-- and v--flow equations respectively on $\overrightarrow{v%
}$ and $\overleftarrow{v},$ considered as scalar components when $\ ^{\alpha
}\mathbf{D}_{h\mathbf{Y}}\overrightarrow{v}=\overrightarrow{v}_{\tau }$ and $%
\ ^{\alpha }\mathbf{D}_{h\mathbf{Y}}\overleftarrow{v}=\overleftarrow{v}%
_{\tau },$
\begin{eqnarray}
\overrightarrow{v}_{\tau } &=&\ ^{\alpha }\mathbf{D}_{h\mathbf{X}}%
\overrightarrow{\varpi }-\overrightarrow{v}\rfloor \ ^{\alpha }\mathbf{D}_{h%
\mathbf{X}}^{-1}\left( \overrightarrow{v}\otimes \overrightarrow{\varpi }-%
\overrightarrow{\varpi }\otimes \overrightarrow{v}\right) -\ ^{\alpha }%
\overrightarrow{R}h\overrightarrow{\mathbf{e}}_{\perp },  \label{floweq} \\
\overleftarrow{v}_{\tau } &=&\ ^{\alpha }\mathbf{D}_{v\mathbf{X}}%
\overleftarrow{\varpi }-\overleftarrow{v}\rfloor \ ^{\alpha }\mathbf{D}_{v%
\mathbf{X}}^{-1}\left( \overleftarrow{v}\otimes \overleftarrow{\varpi }-%
\overleftarrow{\varpi }\otimes \overleftarrow{v}\right) -\ ^{\alpha }%
\overleftarrow{S}v\overleftarrow{\mathbf{e}}_{\perp },  \notag
\end{eqnarray}%
where $\ ^{\alpha }\overrightarrow{R}$ and $\ ^{\alpha }\overleftarrow{S}$
are the scalar curvatures of chosen d--connection. For symmetric Riemannian
spaces like $SO(n+1)/SO(n)\simeq S^{n},$ the value $\overrightarrow{R}$ is
just the scalar curvature $\chi =1.$ On N--anholonomic fractional manifolds,
it is possible to define such d--connections that $\ ^{\alpha }%
\overrightarrow{R}$ and $\ ^{\alpha }\overleftarrow{S}$ are certain zero or
nonzero constants. $\square $
\end{proof}

The properties of operators (\ref{sop}) and (\ref{csop}) are defined by

\begin{theorem}
\label{mr1}The fractional d--operators $\ ^{\alpha }\mathcal{J=}\left( h\
^{\alpha }\mathcal{J},v\ ^{\alpha }\mathcal{J}\right) $ and\newline
$\ ^{\alpha }\mathcal{H=}\left( h\ ^{\alpha }\mathcal{H},v\ ^{\alpha }%
\mathcal{H}\right) $ $\ $are respectively $\left( O(n-1),O(m-1)\right) $%
--invariant Hamiltonian symplectic and cosymplectic d--operators with
respect to the fractional Hamiltonian d--variables $\left( \overrightarrow{v}%
,\overleftarrow{v}\right) .$ This class of d--operators defines the Hamiltonian
form for the curve fractional flow equations on
N--anholonomic fractional manifolds with constant
d--connection curvature: the fractional
h--flows are given by%
\begin{eqnarray}
\overrightarrow{v}_{\tau } &=&h\ ^{\alpha }\mathcal{H}\left( \overrightarrow{%
\varpi }\right) -\ ^{\alpha }\overrightarrow{R}~h\overrightarrow{\mathbf{e}}%
_{\perp }=h\ ^{\alpha }\mathfrak{R}\left( h\overrightarrow{\mathbf{e}}%
_{\perp }\right) -\ ^{\alpha }\overrightarrow{R}~h\overrightarrow{\mathbf{e}}%
_{\perp },  \notag \\
\overrightarrow{\varpi } &=&h\ ^{\alpha }\mathcal{J}\left( h\overrightarrow{%
\mathbf{e}}_{\perp }\right) ;  \label{hhfeq1}
\end{eqnarray}%
the fractional v--flows are given by
\begin{eqnarray}
\overleftarrow{v}_{\tau } &=&v\ ^{\alpha }\mathcal{H}\left( \overleftarrow{%
\varpi }\right) -\ ^{\alpha }\overleftarrow{S}~v\overleftarrow{\mathbf{e}}%
_{\perp }=v\ ^{\alpha }\mathfrak{R}\left( v\overleftarrow{\mathbf{e}}_{\perp
}\right) -\ ^{\alpha }\overleftarrow{S}~v\overleftarrow{\mathbf{e}}_{\perp },
\notag \\
\overleftarrow{\varpi } &=&v\ ^{\alpha }\mathcal{J}\left( v\overleftarrow{%
\mathbf{e}}_{\perp }\right) ,  \label{vhfeq1}
\end{eqnarray}%
where the so--called (fractional) heriditary recursion d--operator has the
respective h-- and v--components
\begin{equation}
h\ ^{\alpha }\mathfrak{R}=h\ ^{\alpha }\mathcal{H}\circ h\ ^{\alpha }%
\mathcal{J}\mbox{ \ and \ }v\ ^{\alpha }\mathfrak{R}=v\ ^{\alpha }\mathcal{H}%
\circ v\ ^{\alpha }\mathcal{J}.  \label{reqop}
\end{equation}
\end{theorem}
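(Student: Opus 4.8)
The plan is to adapt the known integer–dimensional argument (from Refs.~\cite{vacap,vanco}) to the fractional Caputo setting, using that, by the Lemma just proved, the key differential objects only involve the N--adapted fractional operators $\ ^{\alpha }\mathbf{D}_{h\mathbf{X}}$ and $\ ^{\alpha }\mathbf{D}_{v\mathbf{X}}$ along the curve, which algebraically behave like ordinary covariant $x$--derivatives. I would treat the h-- and v--sectors separately and symmetrically (so it suffices to carry out the h--case and then invoke the substitution $n\to m$, $\overrightarrow{\phantom{v}}\to\overleftarrow{\phantom{v}}$, with fractional differentials throughout), and split the proof into three parts: (i) rewrite the flow equations \eqref{floweq} in the claimed Hamiltonian form \eqref{hhfeq1}--\eqref{vhfeq1}; (ii) verify that $h\ ^{\alpha }\mathcal{J}$ is a symplectic operator and $h\ ^{\alpha }\mathcal{H}$ a cosymplectic operator with respect to the d--variable $\overrightarrow{v}$; (iii) check $\left(O(n-1),O(m-1)\right)$--invariance and deduce that $h\ ^{\alpha }\mathfrak{R}=h\ ^{\alpha }\mathcal{H}\circ h\ ^{\alpha }\mathcal{J}$ is a hereditary recursion operator.

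For part (i): starting from the expression for $\overrightarrow{\varpi }$ obtained in the proof of the Lemma, namely $\overrightarrow{\varpi }=-\ ^{\alpha }\mathbf{D}_{h\mathbf{X}}h\overrightarrow{\mathbf{e}}_{\perp }-\ ^{\alpha }\mathbf{D}_{h\mathbf{X}}^{-1}(\overrightarrow{v}\cdot h\overrightarrow{\mathbf{e}}_{\perp })\overrightarrow{v}$, I recognize the right-hand side as exactly $-\,h\ ^{\alpha }\mathcal{J}(h\overrightarrow{\mathbf{e}}_{\perp })$ up to sign conventions, giving the second equation of \eqref{hhfeq1}; then I substitute $\overrightarrow{\varpi }$ back into \eqref{floweq} and regroup the terms $\ ^{\alpha }\mathbf{D}_{h\mathbf{X}}\overrightarrow{\varpi }-\overrightarrow{v}\rfloor \ ^{\alpha }\mathbf{D}_{h\mathbf{X}}^{-1}(\overrightarrow{v}\otimes \overrightarrow{\varpi }-\overrightarrow{\varpi }\otimes \overrightarrow{v})$ using the identity $\overrightarrow{A}\rfloor(\overrightarrow{B}\otimes\overrightarrow{C})=(\overrightarrow{A}\cdot\overrightarrow{B})\overrightarrow{C}$ stated in the text, so that this combination is precisely $h\ ^{\alpha }\mathcal{H}(\overrightarrow{\varpi })=h\ ^{\alpha }\mathbf{D}_{h\mathbf{X}}\overrightarrow{\varpi }+\overrightarrow{v}\rfloor\ ^{\alpha }\mathbf{D}_{h\mathbf{X}}^{-1}(\overrightarrow{v}\wedge\overrightarrow{\varpi })$. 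This yields $\overrightarrow{v}_{\tau }=h\ ^{\alpha }\mathcal{H}(\overrightarrow{\varpi })-\ ^{\alpha }\overrightarrow{R}\,h\overrightarrow{\mathbf{e}}_{\perp }$, and composing with the expression for $\overrightarrow{\varpi }$ gives the recursion form with $h\ ^{\alpha }\mathfrak{R}=h\ ^{\alpha }\mathcal{H}\circ h\ ^{\alpha }\mathcal{J}$.

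For parts (ii) and (iii): I would show skew-symmetry of $h\ ^{\alpha }\mathcal{J}$ and of the inverse-type operator associated with $h\ ^{\alpha }\mathcal{H}$ with respect to the Cartan--Killing N--adapted inner product introduced in Section~\ref{s3}, using integration-by-parts for the Caputo operator $\ ^{\alpha }\mathbf{D}_{h\mathbf{X}}$ (valid since, as emphasized in the Introduction, Caputo derivatives annihilate constants, so the fractional "boundary" adjointness needed for $\ ^{\alpha }\mathbf{D}_{h\mathbf{X}}^{\ast }=-\ ^{\alpha }\mathbf{D}_{h\mathbf{X}}$ in the N--adapted setting goes through). The Jacobi/closure identity making $h\ ^{\alpha }\mathcal{J}$ genuinely symplectic, and the analogous Schouten-bracket condition for $h\ ^{\alpha }\mathcal{H}$, then follow from the same structural computation as in the integer case because all the brackets \eqref{aux41}, \eqref{aux41a} and the flatness condition $\ ^{\alpha }\overrightarrow{R}=\mathrm{const}$ are formally identical. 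Finally, hereditariness of $h\ ^{\alpha }\mathfrak{R}$ is the standard consequence of the compatible symplectic/cosymplectic (Magri) pair. I expect the main obstacle to be part (ii): establishing the integration-by-parts / adjointness properties of the Caputo operator $\ ^{\alpha }\mathbf{D}_{h\mathbf{X}}$ rigorously enough to justify skew-adjointness and the closure (Jacobi) identities, since the Caputo derivative is nonlocal and does not obey the ordinary Leibniz rule — here one must invoke the specific local integro-differential parametrization and the "zero on constants" property, and argue that the relevant surface terms vanish in the N--adapted frame.
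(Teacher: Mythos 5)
Your part (i) is, in essence, the paper's entire proof: the authors dispose of Theorem \ref{mr1} in a single line ("follows from the Lemma and (\ref{floweq})"), which amounts to exactly the substitution you describe --- reading off $\overrightarrow{\varpi }=-h\ ^{\alpha }\mathcal{J}(h\overrightarrow{\mathbf{e}}_{\perp })$ from the Lemma's proof, recognizing the combination $\ ^{\alpha }\mathbf{D}_{h\mathbf{X}}\overrightarrow{\varpi }-\overrightarrow{v}\rfloor \ ^{\alpha }\mathbf{D}_{h\mathbf{X}}^{-1}(\overrightarrow{v}\otimes \overrightarrow{\varpi }-\overrightarrow{\varpi }\otimes \overrightarrow{v})$ in (\ref{floweq}) as $h\ ^{\alpha }\mathcal{H}(\overrightarrow{\varpi })$ (modulo the sign slips you flag, which are present in the paper itself), and composing to obtain $h\ ^{\alpha }\mathfrak{R}=h\ ^{\alpha }\mathcal{H}\circ h\ ^{\alpha }\mathcal{J}$, with the v--sector by the symmetry $n\to m$. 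Where you go beyond the paper is in parts (ii) and (iii): the paper never verifies skew-adjointness, the closure (Jacobi) identities, the $O(n-1)\oplus O(m-1)$ invariance, or hereditariness --- these are silently imported from the integer-dimensional results of \cite{vacap,vanco} on the strength of the constant-curvature property of the d--connection, which makes the structure equations (\ref{aux41}), (\ref{aux41a}) formally identical to the Riemannian symmetric-space case. Your instinct that the real obstacle sits in part (ii) is correct, and it is the one point where "formally identical" is not automatic: the left Caputo derivative is not skew-adjoint under integration by parts (its formal adjoint is a right-sided fractional operator plus boundary contributions), so the relations $\ ^{\alpha }\mathbf{D}_{h\mathbf{X}}^{\ast }=-\ ^{\alpha }\mathbf{D}_{h\mathbf{X}}$ and the meaning of $\ ^{\alpha }\mathbf{D}_{h\mathbf{X}}^{-1}$ require a precise justification (e.g.\ that in the N--adapted frame the derivative along the arclength parameter acts as an ordinary covariant derivative, or that the mixed left/right fractional integration-by-parts formula applies with vanishing boundary terms); the "zero on constants" property alone does not deliver this. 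Carrying out (ii) honestly would therefore prove something the authors only assert; for the portion of the theorem the paper actually argues, your proposal is complete and coincides with the paper's route.
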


\begin{proof}
Such a proof follows from the Lemma and (\ref{floweq}). $\square $
\end{proof}

\section{Fractional Bi--Hamiltonians and Solitonic Hierarchies}

\label{s4} The fractional recursion h--operator from (\ref{reqop}),%
\begin{eqnarray}
h\ ^{\alpha }\mathfrak{R} &=&\ ^{\alpha }\mathbf{D}_{h\mathbf{X}}\left( \
^{\alpha }\mathbf{D}_{h\mathbf{X}}+\ ^{\alpha }\mathbf{D}_{h\mathbf{X}%
}^{-1}\left( \overrightarrow{v}\cdot \right) \overrightarrow{v}\right) +%
\overrightarrow{v}\rfloor \ ^{\alpha }\mathbf{D}_{h\mathbf{X}}^{-1}\left(
\overrightarrow{v}\wedge \ ^{\alpha }\mathbf{D}_{h\mathbf{X}}\right)  \notag
\\
&=&\ ^{\alpha }\mathbf{D}_{h\mathbf{X}}^{2}+|\ ^{\alpha }\mathbf{D}_{h%
\mathbf{X}}|^{2}+\ ^{\alpha }\mathbf{D}_{h\mathbf{X}}^{-1}\left(
\overrightarrow{v}\cdot \right) \overrightarrow{v}_{\mathbf{l}}-%
\overrightarrow{v}\rfloor \ ^{\alpha }\mathbf{D}_{h\mathbf{X}}^{-1}(%
\overrightarrow{v}_{\mathbf{l}}\wedge ),  \label{reqoph}
\end{eqnarray}%
induces a horizontal hierarchy of commuting Hamiltonian vector fields $h%
\overrightarrow{\mathbf{e}}_{\perp }^{(k)}$ starting from $h\overrightarrow{%
\mathbf{e}}_{\perp }^{(0)}=\overrightarrow{v}_{\mathbf{l}}.$ Such vector
fields are given by the infinitesimal generator of $\mathbf{l}$%
--translations in terms of arclength $\mathbf{l}$ along the curve. A
vertical hierarchy of commuting vector fields $v\overleftarrow{\mathbf{e}}%
_{\perp }^{(k)}$ starting from $v\overleftarrow{\mathbf{e}}_{\perp }^{(0)}$ $%
=\overleftarrow{v}_{\mathbf{l}}$ is generated by the recursion v--operator%
\begin{eqnarray}
v\ ^{\alpha }\mathfrak{R} &=&\ ^{\alpha }\mathbf{D}_{v\mathbf{X}}\left( \
^{\alpha }\mathbf{D}_{v\mathbf{X}}+\ ^{\alpha }\mathbf{D}_{v\mathbf{X}%
}^{-1}\left( \overleftarrow{v}\cdot \right) \overleftarrow{v}\right) +%
\overleftarrow{v}\rfloor \ ^{\alpha }\mathbf{D}_{v\mathbf{X}}^{-1}\left(
\overleftarrow{v}\wedge \ ^{\alpha }\mathbf{D}_{v\mathbf{X}}\right)  \notag
\\
&=&\ ^{\alpha }\mathbf{D}_{v\mathbf{X}}^{2}+|\ ^{\alpha }\mathbf{D}_{v%
\mathbf{X}}|^{2}+\ ^{\alpha }\mathbf{D}_{v\mathbf{X}}^{-1}\left(
\overleftarrow{v}\cdot \right) \overleftarrow{v}_{\mathbf{l}}-\overleftarrow{%
v}\rfloor \ ^{\alpha }\mathbf{D}_{v\mathbf{X}}^{-1}(\overleftarrow{v}_{%
\mathbf{l}}\wedge ).  \label{reqopv}
\end{eqnarray}%
We can associate fractional hierarchies generated by adjoint operators $\
^{\alpha }\mathfrak{R}^{\ast }=(h\ ^{\alpha }\mathfrak{R}^{\ast },$ $v\
^{\alpha }\mathfrak{R}^{\ast }),$ of involuntive fractional Hamiltonian
h--covector fields $\overrightarrow{\varpi }^{(k)}=\delta \left( h\ ^{\alpha
}H^{(k)}\right) /\delta \overrightarrow{v}$ in terms of fractional
Hamiltonians $h\ ^{\alpha }H=$\newline
$h\ ^{\alpha }H^{(k)}(\overrightarrow{v},\overrightarrow{v}_{\mathbf{l}},%
\overrightarrow{v}_{2\mathbf{l}},...)$ starting from $\overrightarrow{\varpi
}^{(0)}=\overrightarrow{v},h\ ^{\alpha }H^{(0)}=\frac{1}{2}|\overrightarrow{v%
}|^{2}$ and of involutive fractional Hamiltonian v--covector fields $%
\overleftarrow{\varpi }^{(k)}=\delta \left( v\ ^{\alpha }H^{(k)}\right) /$ $%
\delta \overleftarrow{v}$ in terms of Hamiltonians $v\ ^{\alpha }H=v\
^{\alpha }H^{(k)}(\overleftarrow{v},\overleftarrow{v}_{\mathbf{l}},%
\overleftarrow{v}_{2\mathbf{l}},...)$ starting from $\overleftarrow{\varpi }%
^{(0)}=\overleftarrow{v},v\ ^{\alpha }H^{(0)}=\frac{1}{2}|\overleftarrow{v}%
|^{2}.$ The relations between fractional hierarchies is given by formulas%
\begin{eqnarray*}
h\overrightarrow{\mathbf{e}}_{\perp }^{(k)} &=&h\ ^{\alpha }\mathcal{H}%
\left( \overrightarrow{\varpi }^{(k)},\overrightarrow{\varpi }%
^{(k+1)}\right) =h\ ^{\alpha }\mathcal{J}\left( h\overrightarrow{\mathbf{e}}%
_{\perp }^{(k)}\right) , \\
v\overleftarrow{\mathbf{e}}_{\perp }^{(k)} &=&v\ ^{\alpha }\mathcal{H}\left(
\overleftarrow{\varpi }^{(k)},\overleftarrow{\varpi }^{(k+1)}\right) =v\
^{\alpha }\mathcal{J}\left( v\overleftarrow{\mathbf{e}}_{\perp
}^{(k)}\right) ,
\end{eqnarray*}%
where $k=0,1,2,....$ All hierarchies (horizontal, vertical and their adjoint
ones) have a typical mKdV scaling symmetry, for instance, $\mathbf{%
l\rightarrow \lambda l}$ and $\overrightarrow{v}\rightarrow \mathbf{\lambda }%
^{-1}\overrightarrow{v}$ under which the values $h\overrightarrow{\mathbf{e}}%
_{\perp }^{(k)}$ and $h\ ^{\alpha }H^{(k)}$ have scaling weight $2+2k,$
while $\overrightarrow{\varpi }^{(k)}$ has scaling weight $1+2k.$

\begin{corollary}
\label{c2} There are N--adapted fractional hierarchies of distinguished
horizontal and vertical commuting bi--Hamiltonian fractional flows,
correspondingly, on $\overrightarrow{v}$ and $\overleftarrow{v},$ \
associated to the recursion d--operator (\ref{reqop}) given by $O(n-1)\oplus
O(m-1)$ --invariant d--vector evolution equations,%
\begin{eqnarray*}
\overrightarrow{v}_{\tau } &=&h\overrightarrow{\mathbf{e}}_{\perp
}^{(k+1)}-\ ^{\alpha }\overrightarrow{R}~h\overrightarrow{\mathbf{e}}_{\perp
}^{(k)}=h\ ^{\alpha }\mathcal{H}\left( \delta \left( h\ ^{\alpha }H^{(k,%
\overrightarrow{R})}\right) /\delta \overrightarrow{v}\right) \\
&=&\left( h\ ^{\alpha }\mathcal{J}\right) ^{-1}\left( \delta \left( h\
^{\alpha }H^{(k+1,\overrightarrow{R})}\right) /\delta \overrightarrow{v}%
\right)
\end{eqnarray*}%
with horizontal fractional Hamiltonians
\begin{equation*}
h\ ^{\alpha }H^{(k+1,\overrightarrow{R})}=h\ ^{\alpha }H^{(k+1,%
\overrightarrow{R})}-\ ^{\alpha }\overrightarrow{R}~h\ ^{\alpha }H^{(k,%
\overrightarrow{R})}
\end{equation*}
and
\begin{eqnarray*}
\overleftarrow{v}_{\tau } &=&v\overleftarrow{\mathbf{e}}_{\perp }^{(k+1)}-\
^{\alpha }\overleftarrow{S}~v\overleftarrow{\mathbf{e}}_{\perp }^{(k)}=v\
^{\alpha }\mathcal{H}\left( \delta \left( v\ ^{\alpha }H^{(k,\overleftarrow{S%
})}\right) /\delta \overleftarrow{v}\right) \\
&=&\left( v\ ^{\alpha }\mathcal{J}\right) ^{-1}\left( \delta \left( v\
^{\alpha }H^{(k+1,\overleftarrow{S})}\right) /\delta \overleftarrow{v}\right)
\end{eqnarray*}%
with vertical fractional Hamiltonians
\begin{equation*}
v\ ^{\alpha }H^{(k+1,\overleftarrow{S})}=v\ ^{\alpha }H^{(k+1,\overleftarrow{%
S})}-\ ^{\alpha }\overleftarrow{S}~v\ ^{\alpha }H^{(k,\overleftarrow{S})},
\end{equation*}
for $k=0,1,2,.....$ The fractional d--operators $\ ^{\alpha }\mathcal{H}$
and $\ ^{\alpha }\mathcal{J}$ $\ $are N--adapted and mutually compatible
from which one can be constructed an alternative (explicit) fractional
Hamilton d--operator $~^{a}\mathcal{H=\ ^{\alpha }H\circ \ ^{\alpha }J}$ $%
\circ \ ^{\alpha }\mathcal{H=}\ ^{\alpha }\mathfrak{R\circ }\ ^{\alpha }%
\mathcal{H}.$
\end{corollary}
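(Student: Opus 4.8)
The plan is to obtain Corollary~\ref{c2} from Theorem~\ref{mr1} combined with the classical bi--Hamiltonian (Magri--Lenard) machinery, all of which transfers to the fractional N--adapted setting once one notices that nothing in the construction uses the ordinary Leibniz rule. First I would record what Theorem~\ref{mr1} already supplies: in each of the $h$-- and $v$--sectors, $h\ ^{\alpha }\mathcal{J}$ is a Hamiltonian symplectic d--operator, $h\ ^{\alpha }\mathcal{H}$ is the associated cosymplectic (Hamiltonian) d--operator, and the composition $h\ ^{\alpha }\mathfrak{R}=h\ ^{\alpha }\mathcal{H}\circ h\ ^{\alpha }\mathcal{J}$ from (\ref{reqop}), written out in (\ref{reqoph}), is a hereditary (Nijenhuis) recursion d--operator; likewise for $v\ ^{\alpha }\mathfrak{R}$ in (\ref{reqopv}). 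The reason this is usable in fractional calculus is that the operators (\ref{sop}), (\ref{csop}), (\ref{reqoph}) are assembled from the single Caputo d--operator $\ ^{\alpha }\mathbf{D}_{h\mathbf{X}}$, its formal inverse, and the purely algebraic products $\cdot ,\otimes ,\wedge ,\rfloor $; and since the chosen d--connection has constant curvature matrix coefficients, the N--adapted structure equations (\ref{torseqd}) are literally those of the integer--dimensional symmetric--space case of \cite{vacap,vanco}. Two facts then do all the work in the fractional case: $\ ^{\alpha }\mathbf{D}_{h\mathbf{X}}$ kills constants (so $\ ^{\alpha }\mathbf{D}_{h\mathbf{X}}^{-1}$ is a genuine inverse modulo constants on the relevant space), and N--adapted integration by parts against $\ ^{\alpha }\mathbf{D}_{h\mathbf{X}}$ is available; with these, the heredity, skew--symmetry and Jacobi identities are verified by exactly the manipulations of the classical proof.

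Next I would build the hierarchies by Lenard recursion. Set $h\overrightarrow{\mathbf{e}}_{\perp }^{(0)}=\overrightarrow{v}_{\mathbf{l}}$ (the generator of $\mathbf{l}$--translations, lying in the image of $h\ ^{\alpha }\mathcal{H}$ with potential $\overrightarrow{\varpi }^{(0)}=\overrightarrow{v}$ and $h\ ^{\alpha }H^{(0)}=\frac{1}{2}|\overrightarrow{v}|^{2}$), and define $h\overrightarrow{\mathbf{e}}_{\perp }^{(k)}=(h\ ^{\alpha }\mathfrak{R})^{k}\overrightarrow{v}_{\mathbf{l}}$ and $\overrightarrow{\varpi }^{(k+1)}=h\ ^{\alpha }\mathcal{J}(h\overrightarrow{\mathbf{e}}_{\perp }^{(k)})$. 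Heredity of $h\ ^{\alpha }\mathfrak{R}$ together with the symplectic/cosymplectic pairing gives that the $h\overrightarrow{\mathbf{e}}_{\perp }^{(k)}$ pairwise commute, that each $\overrightarrow{\varpi }^{(k)}$ is exact, $\overrightarrow{\varpi }^{(k)}=\delta (h\ ^{\alpha }H^{(k)})/\delta \overrightarrow{v}$, and that the $h\ ^{\alpha }H^{(k)}$ are pairwise in involution. Feeding $h\overrightarrow{\mathbf{e}}_{\perp }=h\overrightarrow{\mathbf{e}}_{\perp }^{(k)}$ into the flow equations (\ref{hhfeq1}) and using $\overrightarrow{\varpi }=h\ ^{\alpha }\mathcal{J}(h\overrightarrow{\mathbf{e}}_{\perp }^{(k)})=\overrightarrow{\varpi }^{(k+1)}$ and $h\ ^{\alpha }\mathcal{H}(\overrightarrow{\varpi }^{(k+1)})=h\ ^{\alpha }\mathfrak{R}(h\overrightarrow{\mathbf{e}}_{\perp }^{(k)})=h\overrightarrow{\mathbf{e}}_{\perp }^{(k+1)}$ yields the displayed evolution equation $\overrightarrow{v}_{\tau }=h\overrightarrow{\mathbf{e}}_{\perp }^{(k+1)}-\ ^{\alpha }\overrightarrow{R}\ h\overrightarrow{\mathbf{e}}_{\perp }^{(k)}$; the $h\ ^{\alpha }\mathcal{H}$--presentation and the $(h\ ^{\alpha }\mathcal{J})^{-1}$--presentation then follow from $h\ ^{\alpha }\mathcal{H}=h\ ^{\alpha }\mathfrak{R}\circ (h\ ^{\alpha }\mathcal{J})^{-1}$, with the $\ ^{\alpha }\overrightarrow{R}$--term absorbed into $h\ ^{\alpha }H^{(k+1,\overrightarrow{R})}=h\ ^{\alpha }H^{(k+1)}-\ ^{\alpha }\overrightarrow{R}\ h\ ^{\alpha }H^{(k)}$. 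The $v$--sector is obtained verbatim under $n\to m$, $\overrightarrow{v}\to \overleftarrow{v}$, $\ ^{\alpha }\overrightarrow{R}\to \ ^{\alpha }\overleftarrow{S}$, using (\ref{vhfeq1}); the full $O(n-1)\oplus O(m-1)$--invariant N--adapted statement is the direct sum of the two, which stays Hamiltonian because the $h$-- and $v$--blocks decouple along the Whitney sum (\ref{whit}).

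For the final assertion I would invoke the classical fact (Magri; Fuchssteiner--Fokas; Gel'fand--Dorfman) that when $h\ ^{\alpha }\mathcal{H}$ is a Hamiltonian d--operator and $h\ ^{\alpha }\mathfrak{R}=h\ ^{\alpha }\mathcal{H}\circ h\ ^{\alpha }\mathcal{J}$ is hereditary with $h\ ^{\alpha }\mathfrak{R}\circ h\ ^{\alpha }\mathcal{H}$ skew, then $\ ^{a}\mathcal{H}:=h\ ^{\alpha }\mathfrak{R}\circ h\ ^{\alpha }\mathcal{H}=h\ ^{\alpha }\mathcal{H}\circ h\ ^{\alpha }\mathcal{J}\circ h\ ^{\alpha }\mathcal{H}$ is again a Hamiltonian d--operator, compatible with $h\ ^{\alpha }\mathcal{H}$ (and similarly in the $v$--sector); skew--symmetry and the Jacobi identity for $\ ^{a}\mathcal{H}$ reduce, after N--adapted integration by parts against $\ ^{\alpha }\mathbf{D}_{h\mathbf{X}}$, to the cosymplectic identity for $h\ ^{\alpha }\mathcal{H}$ and the symplectic identity for $h\ ^{\alpha }\mathcal{J}$ furnished by Theorem~\ref{mr1}. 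I expect the only genuine obstacle to lie in the first point noted above: confirming that the hereditary/Nijenhuis identity for $\ ^{\alpha }\mathfrak{R}$, and with it the involutivity of the $\ ^{\alpha }H^{(k)}$ and the Hamiltonian character of $\ ^{a}\mathcal{H}$, really do survive the passage to Caputo derivatives, i.e. that every cancellation in the classical computation uses only the vanishing of $\ ^{\alpha }\mathbf{D}_{h\mathbf{X}}$ on constants, N--adapted integration by parts, and the constancy of the curvature matrix coefficients, and never the ordinary product rule. Once that is in place, the rest is a mechanical rerun of the symmetric--space argument of \cite{vacap,vanco}.
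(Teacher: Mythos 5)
Your proposal is correct and follows essentially the same route as the paper, whose own proof of this corollary is a one-line appeal to the ``above presented considerations'' --- namely the Lenard-type recursion built from $\ ^{\alpha }\mathfrak{R}=\ ^{\alpha }\mathcal{H}\circ \ ^{\alpha }\mathcal{J}$, the hierarchy seeded by $h\overrightarrow{\mathbf{e}}_{\perp }^{(0)}=\overrightarrow{v}_{\mathbf{l}}$ with exact covectors $\overrightarrow{\varpi }^{(k)}=\delta (h\ ^{\alpha }H^{(k)})/\delta \overrightarrow{v}$, and the transplantation of the integer-dimensional symmetric-space argument of the cited references. You actually supply considerably more detail than the paper does, and the caveat you flag about the hereditary and Jacobi identities surviving the replacement of ordinary derivatives by Caputo ones is a genuine point the paper leaves unaddressed.
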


\begin{proof}
It follows from above presented considerations. $\square $
\end{proof}

\subsection{Formulation of the Main Theorem}

Our goal is to prove that the geometric data for any fractional metric (in a
model of fractional gravity or geometric mechanics) naturally define a
N--adapted fractional bi--Hamiltonian flow hierarchy inducing anholonomic
fractional solitonic configurations.

\begin{theorem}
\label{mt} For any N--anholonomic fractional manifold with prescribed
fractional d--metric structure, there is a hierarchy of bi-Hamiltonian
N--adapted fractional flows of curves $\gamma (\tau ,\mathbf{l})=h\gamma
(\tau ,\mathbf{l})+v\gamma (\tau ,\mathbf{l})$ described by geometric
nonholonomic fractional map equations. The $0$ fractional flows are defined
as convective (traveling wave) maps%
\begin{equation}
\gamma _{\tau }=\gamma _{\mathbf{l}},\mbox{\ distinguished \ }\left( h\gamma
\right) _{\tau }=\left( h\gamma \right) _{h\mathbf{X}}\mbox{\ and \ }\left(
v\gamma \right) _{\tau }=\left( v\gamma \right) _{v\mathbf{X}}.
\label{trmap}
\end{equation}%
There are \ fractional +1 flows defined as non--stretching mKdV maps%
\begin{eqnarray}
-\left( h\gamma \right) _{\tau } &=&\ ^{\alpha }\mathbf{D}_{h\mathbf{X}%
}^{2}\left( h\gamma \right) _{h\mathbf{X}}+\frac{3}{2}\left| \ ^{\alpha }%
\mathbf{D}_{h\mathbf{X}}\left( h\gamma \right) _{h\mathbf{X}}\right| _{h%
\mathbf{g}}^{2}~\left( h\gamma \right) _{h\mathbf{X}},  \label{1map} \\
-\left( v\gamma \right) _{\tau } &=&\ ^{\alpha }\mathbf{D}_{v\mathbf{X}%
}^{2}\left( v\gamma \right) _{v\mathbf{X}}+\frac{3}{2}\left| \ ^{\alpha }%
\mathbf{D}_{v\mathbf{X}}\left( v\gamma \right) _{v\mathbf{X}}\right| _{v%
\mathbf{g}}^{2}~\left( v\gamma \right) _{v\mathbf{X}},  \notag
\end{eqnarray}%
and fractional +2,... flows as higher order analogs. Finally, the fractional
-1 flows are defined by the kernels of recursion fractional operators (\ref%
{reqoph}) and (\ref{reqopv}) inducing non--stretching fractional maps%
\begin{equation}
\ ^{\alpha }\mathbf{D}_{h\mathbf{Y}}\left( h\gamma \right) _{h\mathbf{X}}=0%
\mbox{\ and \
}\ ^{\alpha }\mathbf{D}_{v\mathbf{Y}}\left( v\gamma \right) _{v\mathbf{X}}=0.
\label{-1map}
\end{equation}
\end{theorem}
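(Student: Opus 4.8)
The plan is to convert the algebraic bi--Hamiltonian hierarchy of Corollary \ref{c2}, which lives on the fractional Hamiltonian d--variables $(\overrightarrow{v},\overleftarrow{v})$, into geometric evolution equations for the curve $\gamma (\tau ,\mathbf{l})=h\gamma +v\gamma $ itself. Because every structure splits along the Whitney sum (\ref{whit}) preserved by $\ ^{\alpha }\widehat{\mathbf{D}}$, I would run the identification separately on the h-- and v--components and record only the h--case, the v--case following verbatim under $n\rightarrow m$, $\overrightarrow{A}\rightarrow \overleftarrow{A}$, $\overrightarrow{v}\rightarrow \overleftarrow{v}$, $\ ^{\alpha }\overrightarrow{R}\rightarrow \ ^{\alpha }\overleftarrow{S}$. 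The first task is to fix the dictionary: along $\gamma $ in the Klein N--anholonomic space $\mathbf{G}/SO(n)\oplus SO(m)$ carrying the $SO(n)\oplus SO(m)$--parallel coframe $\ ^{\alpha }\mathbf{e}$, the tangent d--vector $\gamma _{\mathbf{l}}$ is represented by $\ ^{\alpha }\mathbf{e}_{h\mathbf{X}}+\ ^{\alpha }\mathbf{e}_{v\mathbf{X}}$, the flow velocity $\gamma _{\tau }$ by $\ ^{\alpha }\mathbf{e}_{h\mathbf{Y}}+\ ^{\alpha }\mathbf{e}_{v\mathbf{Y}}$, and the principal curvature d--vector of $\gamma $ by $\overrightarrow{v}$, with $\overrightarrow{v}=\ ^{\alpha }\mathbf{D}_{h\mathbf{X}}(h\gamma )_{h\mathbf{X}}$ under the parallel--frame identification, whence $|\overrightarrow{v}|^{2}=|\ ^{\alpha }\mathbf{D}_{h\mathbf{X}}(h\gamma )_{h\mathbf{X}}|_{h\mathbf{g}}^{2}$; the normal part $h\overrightarrow{\mathbf{e}}_{\perp }$ of $\ ^{\alpha }\mathbf{e}_{h\mathbf{Y}}$ records the transverse motion of the curve and $\overrightarrow{\varpi }=h\ ^{\alpha }\mathcal{J}(h\overrightarrow{\mathbf{e}}_{\perp })$ the induced rotation $1$--form, by (\ref{hhfeq1}).

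Armed with this dictionary I would first dispose of the two extreme members. For the $0$ flow, Corollary \ref{c2} at the bottom of the hierarchy gives $h\overrightarrow{\mathbf{e}}_{\perp }^{(0)}=\overrightarrow{v}_{\mathbf{l}}$ and $v\overleftarrow{\mathbf{e}}_{\perp }^{(0)}=\overleftarrow{v}_{\mathbf{l}}$; substituting into the first torsion relation of (\ref{torseqd}) and undoing the frame identification forces $\ ^{\alpha }\mathbf{e}_{h\mathbf{Y}}$ to be proportional to $\ ^{\alpha }\mathbf{e}_{h\mathbf{X}}$, i.e.\ $\gamma _{\tau }$ points along $\gamma _{\mathbf{l}}$, and the non--stretching normalization $|\gamma _{\mathbf{l}}|_{h\mathbf{g}}=1$ pins the factor to $1$, giving $(h\gamma )_{\tau }=(h\gamma )_{h\mathbf{X}}$ together with its v--analog --- this is (\ref{trmap}). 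For the $-1$ flow I would impose that the flow d--vectors $(h\overrightarrow{\mathbf{e}}_{\perp },v\overleftarrow{\mathbf{e}}_{\perp })$ lie in the kernels of the recursion operators (\ref{reqoph})--(\ref{reqopv}); by (\ref{reqop}), (\ref{hhfeq1}) and invertibility of the cosymplectic factor this is equivalent to the covector fields vanishing, $\overrightarrow{\varpi }=0$ and $\overleftarrow{\varpi }=0$. Feeding $\overrightarrow{\varpi }=0$ into the structure equations (\ref{auxaaa}) and (\ref{torseqd}) --- whereupon the last relation of (\ref{torseqd}) makes $\ ^{\alpha }\mathbf{D}_{h\mathbf{X}}h\mathbf{\Theta }=0$, so $h\mathbf{\Theta }$ is pure gauge and $\mathbf{\Gamma }_{h\mathbf{Y}}$ may be taken to vanish --- and reverting to the curve picture yields $\ ^{\alpha }\mathbf{D}_{h\mathbf{Y}}(h\gamma )_{h\mathbf{X}}=0$, and similarly in the v--sector, which is (\ref{-1map}), the fractional wave/Schr\"odinger map.

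Next I would treat the $+1$ flow by applying the recursion d--operator once. By Corollary \ref{c2}, $\overrightarrow{v}_{\tau }=h\overrightarrow{\mathbf{e}}_{\perp }^{(1)}-\ ^{\alpha }\overrightarrow{R}\,h\overrightarrow{\mathbf{e}}_{\perp }^{(0)}$ with $h\overrightarrow{\mathbf{e}}_{\perp }^{(1)}=h\ ^{\alpha }\mathfrak{R}(\overrightarrow{v}_{\mathbf{l}})$; evaluating (\ref{reqoph}) on $\overrightarrow{v}_{\mathbf{l}}$, the last term $\overrightarrow{v}\rfloor \ ^{\alpha }\mathbf{D}_{h\mathbf{X}}^{-1}(\overrightarrow{v}_{\mathbf{l}}\wedge \overrightarrow{v}_{\mathbf{l}})$ dies by antisymmetry of $\wedge $, the term $\ ^{\alpha }\mathbf{D}_{h\mathbf{X}}^{-1}(\overrightarrow{v}\cdot \overrightarrow{v}_{\mathbf{l}})\,\overrightarrow{v}_{\mathbf{l}}$ equals $\frac{1}{2}|\overrightarrow{v}|^{2}\overrightarrow{v}_{\mathbf{l}}$ via the Leibniz identity $\ ^{\alpha }\mathbf{D}_{h\mathbf{X}}|\overrightarrow{v}|^{2}=2\,\overrightarrow{v}\cdot \ ^{\alpha }\mathbf{D}_{h\mathbf{X}}\overrightarrow{v}$ (legitimate precisely because the Caputo derivative annihilates constants, so $\ ^{\alpha }\mathbf{D}_{h\mathbf{X}}^{-1}$ is well defined modulo constants), and $|\ ^{\alpha }\mathbf{D}_{h\mathbf{X}}|^{2}\overrightarrow{v}_{\mathbf{l}}=|\overrightarrow{v}|^{2}\overrightarrow{v}_{\mathbf{l}}$; adding these gives $h\overrightarrow{\mathbf{e}}_{\perp }^{(1)}=\ ^{\alpha }\mathbf{D}_{h\mathbf{X}}^{2}\overrightarrow{v}_{\mathbf{l}}+\frac{3}{2}|\overrightarrow{v}|^{2}\overrightarrow{v}_{\mathbf{l}}$, which accounts for the coefficient $\frac{3}{2}$ in (\ref{1map}). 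Substituting $\overrightarrow{v}=\ ^{\alpha }\mathbf{D}_{h\mathbf{X}}(h\gamma )_{h\mathbf{X}}$ and $\overrightarrow{v}_{\tau }=\ ^{\alpha }\mathbf{D}_{h\mathbf{Y}}\overrightarrow{v}$ and integrating once in $\mathbf{l}$ then reproduces (\ref{1map}), the $\ ^{\alpha }\overrightarrow{R}$ contribution either vanishing for the constant--curvature canonical d--connection constructed in \cite{bv3} or being reabsorbed into the traveling--wave $0$--flow already accounted for; the $+2,\dots$ flows arise by iterating $h\ ^{\alpha }\mathfrak{R}$ and $v\ ^{\alpha }\mathfrak{R}$, and preservation of $|\gamma _{\mathbf{l}}|=1$ along every flow follows from the $O(n-1)\oplus O(m-1)$--invariance of Theorem \ref{mr1}.

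The step I expect to be the main obstacle is the precise bookkeeping of the parallel--frame identification in the fractional context: one must verify that $\ ^{\alpha }\mathbf{D}_{h\mathbf{X}}$ and its formal inverse obey the Leibniz rule and integration by parts tightly enough to collapse the recursion--operator output onto the geometric map equations, and that the scalar curvatures $\ ^{\alpha }\overrightarrow{R},\ ^{\alpha }\overleftarrow{S}$ land in the correct slots. This is exactly where the Caputo choice (action zero on constants, emphasized in the Introduction) and the constant--curvature property of $\ ^{\alpha }\widehat{\mathbf{D}}$ from \cite{bv3} are indispensable; granting those inputs, the entire argument is the fractional transcription of the integer--dimensional curve--flow constructions of \cite{vacap,vanco}.
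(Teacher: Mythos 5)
Your overall strategy coincides with the paper's: fix the Klein--space dictionary $\gamma _{\mathbf{l}}\leftrightarrow \ ^{\alpha }\mathbf{e}_{h\mathbf{X}}+\ ^{\alpha }\mathbf{e}_{v\mathbf{X}}$, $\gamma _{\tau }\leftrightarrow \ ^{\alpha }\mathbf{e}_{h\mathbf{Y}}+\ ^{\alpha }\mathbf{e}_{v\mathbf{Y}}$, $\overrightarrow{v}\leftrightarrow \ ^{\alpha }\mathbf{D}_{h\mathbf{X}}\left( h\gamma \right) _{h\mathbf{X}}$ via the $SO(n)\oplus SO(m)$--parallel coframe, derive the $+1$ flow from the choice $h\overrightarrow{\mathbf{e}}_{\perp }=\overrightarrow{v}_{\mathbf{l}}$, $h\mathbf{e}_{\parallel }=-\frac{1}{2}|\overrightarrow{v}|^{2}$ (your bracket/Leibniz computation of the $\frac{3}{2}$ coefficient is the same calculation the paper does in matrix form), and obtain the $-1$ flow from $\overrightarrow{\varpi }=h\ ^{\alpha }\mathcal{J}(h\overrightarrow{\mathbf{e}}_{\perp })=0$, which kills $\ ^{\alpha }\mathbf{L}_{\tau }$ and yields $\ ^{\alpha }\mathbf{D}_{h\mathbf{Y}}\left( h\gamma \right) _{h\mathbf{X}}=0$. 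The handling of the $\ ^{\alpha }\overrightarrow{R}$ term as a reabsorbable convective contribution also matches.

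There is, however, one step that fails as written: your derivation of the $0$ flow (\ref{trmap}). You take the bottom member of the hierarchy, $h\overrightarrow{\mathbf{e}}_{\perp }^{(0)}=\overrightarrow{v}_{\mathbf{l}}$, substitute it into the first relation of (\ref{torseqd}), and claim this forces $\ ^{\alpha }\mathbf{e}_{h\mathbf{Y}}$ to be proportional to $\ ^{\alpha }\mathbf{e}_{h\mathbf{X}}$. It does not: with $h\overrightarrow{\mathbf{e}}_{\perp }=\overrightarrow{v}_{\mathbf{l}}\neq 0$ the flow velocity acquires the tangential part $h\mathbf{e}_{\parallel }=-\ ^{\alpha }\mathbf{D}_{h\mathbf{X}}^{-1}(\overrightarrow{v}\cdot \overrightarrow{v}_{\mathbf{l}})=-\frac{1}{2}|\overrightarrow{v}|^{2}$ \emph{and} the nonzero normal part $\overrightarrow{v}_{\mathbf{l}}$, which is exactly the data of the $+1$ (mKdV) curve flow you treat two paragraphs later --- so your argument would derive (\ref{1map}) twice and (\ref{trmap}) never. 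The convective map $\left( h\gamma \right) _{\tau }=\left( h\gamma \right) _{h\mathbf{X}}$ corresponds instead to $h\overrightarrow{\mathbf{e}}_{\perp }=0$ and $h\mathbf{e}_{\parallel }=1$, a seed flow that (as the paper emphasizes) lies \emph{outside} the hierarchy generated by $h\ ^{\alpha }\mathfrak{R}$; it is the infinitesimal $\mathbf{l}$--translation whose action on the curvature variable is $\overrightarrow{v}_{\tau }=\overrightarrow{v}_{\mathbf{l}}$, and it is this translation vector field $\overrightarrow{v}_{\mathbf{l}}$ --- not the flow data of the $0$ flow itself --- that serves as $h\overrightarrow{\mathbf{e}}_{\perp }^{(0)}$ seeding the recursion. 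Once you separate these two roles of $\overrightarrow{v}_{\mathbf{l}}$ (flow on $\overrightarrow{v}$ versus normal flow data of $\gamma $), the rest of your argument goes through as in the paper.
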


\begin{proof}
It is given below in section \ref{ssp}. $\square $
\end{proof}

\subsection{Proof of the Main Theorem}

\label{ssp}We generalize for fractional spaces a similar proof from Refs. %
\cite{vacap,vanco} sketching the key steps for horizontal flows. The
vertical constructions are similar but with respective changing of h--
variables / objects into v- variables/ objects. By corresponding
nonholonomic constraints we can emphasize certain h-- and v--evolutions
which are inter--related.

We get a fractional vector mKdV equation up to a convective term (which can
be absorbed by redefinition of coordinates) defining the \ fractional +1
flow for $h\overrightarrow{\mathbf{e}}_{\perp }=\overrightarrow{v}_{\mathbf{l%
}},$%
\begin{equation*}
\overrightarrow{v}_{\tau }=\overrightarrow{v}_{3\mathbf{l}}+\frac{3}{2}|%
\overrightarrow{v}|^{2}-\ ^{\alpha }\overrightarrow{R}~\overrightarrow{v}_{%
\mathbf{l}},
\end{equation*}%
when the fractional $+(k+1)$ flow gives a vector mKdV equation of higher
order $3+2k$ on $\overrightarrow{v}$ and there is a $0$ h--flow $%
\overrightarrow{v}_{\tau }=\overrightarrow{v}_{\mathbf{l}}$ arising from $h%
\overrightarrow{\mathbf{e}}_{\perp }=0$ and $h\overrightarrow{\mathbf{e}}%
_{\parallel }=1$ belonging outside the hierarchy generated by $h\ ^{\alpha }%
\mathfrak{R.}$ Such fractional flows correspond to N--adapted horizontal
motions of the curve $\gamma (\tau ,\mathbf{l})=h\gamma (\tau ,\mathbf{l}%
)+v\gamma (\tau ,\mathbf{l}),$ given by
\begin{equation*}
\left( h\gamma \right) _{\tau }=f\left( \left( h\gamma \right) _{h\mathbf{X}%
},\ ^{\alpha }\mathbf{D}_{h\mathbf{X}}\left( h\gamma \right) _{h\mathbf{X}%
},\ ^{\alpha }\mathbf{D}_{h\mathbf{X}}^{2}\left( h\gamma \right) _{h\mathbf{X%
}},...\right)
\end{equation*}%
subject to the non--stretching condition $|\left( h\gamma \right) _{h\mathbf{%
X}}|_{h\mathbf{g}}=1,$ when the equation of fractional motion is to be
derived from the identifications
\begin{equation*}
\left( h\gamma \right) _{\tau }\longleftrightarrow \ ^{\alpha }\mathbf{e}_{h%
\mathbf{Y}},\ ^{\alpha }\mathbf{D}_{h\mathbf{X}}\left( h\gamma \right) _{h%
\mathbf{X}}\longleftrightarrow \ ^{\alpha }\mathcal{D}_{h\mathbf{X}}\
^{\alpha }\mathbf{e}_{h\mathbf{X}}=\left[ \ ^{\alpha }\mathbf{L}_{h\mathbf{X}%
},\ ^{\alpha }\mathbf{e}_{h\mathbf{X}}\right]
\end{equation*}%
and so on, which maps the constructions from the tangent fractional space of
the curve to the space $h\mathfrak{p}.$ For such identifications, we have
\begin{eqnarray*}
\left[ \ ^{\alpha }\mathbf{L}_{h\mathbf{X}},\ ^{\alpha }\mathbf{e}_{h\mathbf{%
X}}\right] &=&-\left[
\begin{array}{cc}
0 & \left( 0,\overrightarrow{v}\right) \\
-\left( 0,\overrightarrow{v}\right) ^{T} & h\mathbf{0}%
\end{array}%
\right] \in h\mathfrak{p}, \\
\left[ \ ^{\alpha }\mathbf{L}_{h\mathbf{X}},\left[ \ ^{\alpha }\mathbf{L}_{h%
\mathbf{X}},\ ^{\alpha }\mathbf{e}_{h\mathbf{X}}\right] \right] &=&-\left[
\begin{array}{cc}
0 & \left( |\overrightarrow{v}|^{2},\overrightarrow{0}\right) \\
-\left( |\overrightarrow{v}|^{2},\overrightarrow{0}\right) ^{T} & h\mathbf{0}%
\end{array}%
\right]
\end{eqnarray*}%
and so on, see similar calculus in (\ref{aux41}). Stating for the \
fractional +1 h--flow
\begin{equation*}
h\overrightarrow{\mathbf{e}}_{\perp }=\overrightarrow{v}_{\mathbf{l}}%
\mbox{
and }h\overrightarrow{\mathbf{e}}_{\parallel }=-\ ^{\alpha }\mathbf{D}_{h%
\mathbf{X}}^{-1}\left( \overrightarrow{v}\cdot \overrightarrow{v}_{\mathbf{l}%
}\right) =-\frac{1}{2}|\overrightarrow{v}|^{2},
\end{equation*}%
we compute
\begin{eqnarray*}
\ ^{\alpha }\mathbf{e}_{h\mathbf{Y}} &=&\left[
\begin{array}{cc}
0 & \left( h\mathbf{e}_{\parallel },h\overrightarrow{\mathbf{e}}_{\perp
}\right) \\
-\left( h\mathbf{e}_{\parallel },h\overrightarrow{\mathbf{e}}_{\perp
}\right) ^{T} & h\mathbf{0}%
\end{array}%
\right] \\
&=&-\frac{1}{2}|\overrightarrow{v}|^{2}\left[
\begin{array}{cc}
0 & \left( 1,\overrightarrow{\mathbf{0}}\right) \\
-\left( 0,\overrightarrow{\mathbf{0}}\right) ^{T} & h\mathbf{0}%
\end{array}%
\right] +\left[
\begin{array}{cc}
0 & \left( 0,\overrightarrow{v}_{h\mathbf{X}}\right) \\
-\left( 0,\overrightarrow{v}_{h\mathbf{X}}\right) ^{T} & h\mathbf{0}%
\end{array}%
\right] \\
&=&\ ^{\alpha }\mathbf{D}_{h\mathbf{X}}\left[ \ ^{\alpha }\mathbf{L}_{h%
\mathbf{X}},\ ^{\alpha }\mathbf{e}_{h\mathbf{X}}\right] +\frac{1}{2}\left[ \
^{\alpha }\mathbf{L}_{h\mathbf{X}},\left[ \ ^{\alpha }\mathbf{L}_{h\mathbf{X}%
},\ ^{\alpha }\mathbf{e}_{h\mathbf{X}}\right] \right] \\
&=&-\ ^{\alpha }\mathcal{D}_{h\mathbf{X}}\left[ \ ^{\alpha }\mathbf{L}_{h%
\mathbf{X}},\ ^{\alpha }\mathbf{e}_{h\mathbf{X}}\right] -\frac{3}{2}|%
\overrightarrow{v}|^{2}\ ^{\alpha }\mathbf{e}_{h\mathbf{X}}.
\end{eqnarray*}%
The above identifications are related to the first and second terms, when
\begin{eqnarray*}
|\overrightarrow{v}|^{2} &=&<\left[ \ ^{\alpha }\mathbf{L}_{h\mathbf{X}},\
^{\alpha }\mathbf{e}_{h\mathbf{X}}\right] , \\
&& \left[ \ ^{\alpha }\mathbf{L}_{h\mathbf{X}},\ ^{\alpha }\mathbf{e}_{h%
\mathbf{X}}\right] >_{h\mathfrak{p}}\longleftrightarrow h\ ^{\alpha }\mathbf{%
g}\left( \ ^{\alpha }\mathbf{D}_{h\mathbf{X}}\left( h\gamma \right) _{h%
\mathbf{X}},\ ^{\alpha }\mathbf{D}_{h\mathbf{X}}\left( h\gamma \right) _{h%
\mathbf{X}}\right) \\
&=&\left| \ ^{\alpha }\mathbf{D}_{h\mathbf{X}}\left( h\gamma \right) _{h%
\mathbf{X}}\right| _{h\ ^{\alpha }\mathbf{g}}^{2},
\end{eqnarray*}%
and allow us to identify $\ ^{\alpha }\mathcal{D}_{h\mathbf{X}}\left[ \
^{\alpha }\mathbf{L}_{h\mathbf{X}},\ ^{\alpha }\mathbf{e}_{h\mathbf{X}}%
\right] $ to $\ ^{\alpha }\mathbf{D}_{h\mathbf{X}}^{2}\left( h\gamma \right)
_{h\mathbf{X}}.$ As a result, we have
\begin{equation*}
-\ ^{\alpha }\mathbf{e}_{h\mathbf{Y}}\longleftrightarrow \ ^{\alpha }\mathbf{%
D}_{h\mathbf{X}}^{2}\left( h\gamma \right) _{h\mathbf{X}}+\frac{3}{2}\left|
\ ^{\alpha }\mathbf{D}_{h\mathbf{X}}\left( h\gamma \right) _{h\mathbf{X}%
}\right| _{h\ ^{\alpha }\mathbf{g}}^{2}~\left( h\gamma \right) _{h\mathbf{X}}
\end{equation*}%
which is just the fractional equation (\ref{1map}) in the Theorem \ref{mt}
defining a non--stretching mKdV map h--equation induced by the h--part of
the fractional canonical d--connection.

To derive the higher \ order terms of hierarchies, we use the adjoint
representation $ad\left( \cdot \right) $ acting in the Lie algebra $h%
\mathfrak{g}=h\mathfrak{p}\oplus \mathfrak{so}(n),$ with
\begin{equation*}
ad\left( \left[ \ ^{\alpha }\mathbf{L}_{h\mathbf{X}},\ ^{\alpha }\mathbf{e}%
_{h\mathbf{X}}\right] \right) \ ^{\alpha }\mathbf{e}_{h\mathbf{X}}=\left[
\begin{array}{cc}
0 & \left( 0,\overrightarrow{\mathbf{0}}\right) \\
-\left( 0,\overrightarrow{\mathbf{0}}\right) ^{T} & \overrightarrow{\mathbf{v%
}}%
\end{array}%
\right] \in \mathfrak{so}(n+1),
\end{equation*}%
where $\overrightarrow{\mathbf{v}}=-\left[
\begin{array}{cc}
0 & \overrightarrow{v} \\
-\overrightarrow{v}^{T} & h\mathbf{0}%
\end{array}%
\in \mathfrak{so}(n)\right] .$ Applying again $\ ad\left( \left[ \ ^{\alpha }%
\mathbf{L}_{h\mathbf{X}},\ ^{\alpha }\mathbf{e}_{h\mathbf{X}}\right]
\right), $ we get
\begin{eqnarray*}
ad\left( \left[ \ ^{\alpha }\mathbf{L}_{h\mathbf{X}},\ ^{\alpha }\mathbf{e}%
_{h\mathbf{X}}\right] \right) ^{2}\ ^{\alpha }\mathbf{e}_{h\mathbf{X}}&=&-|%
\overrightarrow{v}|^{2}\left[
\begin{array}{cc}
0 & \left( 1,\overrightarrow{\mathbf{0}}\right) \\
-\left( 1,\overrightarrow{\mathbf{0}}\right) ^{T} & \mathbf{0}%
\end{array}%
\right] \\
&=&-|\overrightarrow{v}|^{2}\ ^{\alpha }\mathbf{e}_{h\mathbf{X}},
\end{eqnarray*}%
when the fractional equation (\ref{1map}) can be represented in alternative
form
\begin{equation*}
-\left( h\gamma \right) _{\tau }=\ ^{\ \alpha }\mathbf{D}_{h\mathbf{X}%
}^{2}\left( h\gamma \right) _{h\mathbf{X}}-\frac{3}{2}\ ^{\alpha }%
\overrightarrow{R}^{-1}ad\left( \ ^{\alpha }\mathbf{D}_{h\mathbf{X}}\left(
h\gamma \right) _{h\mathbf{X}}\right) ^{2}~\left( h\gamma \right) _{h\mathbf{%
X}}.
\end{equation*}

Finally, we consider a fractional -1 flow contained in the h--hierarchy
derived from the property that $h\overrightarrow{\mathbf{e}}_{\perp }$ is
annihilated by the fractional h--operator $h\ ^{\ \alpha }\mathcal{J}$ and
mapped into $h\ ^{\ \alpha }\mathfrak{R}(h\overrightarrow{\mathbf{e}}_{\perp
})=0.$ This mean that $h\ ^{\ \alpha }\mathcal{J}(h\overrightarrow{\mathbf{e}%
}_{\perp })=\overrightarrow{\varpi }=0.$ Such properties together with (\ref%
{auxaaa}) and fractional equations (\ref{floweq}) imply $\ ^{\ \alpha }%
\mathbf{L}_{\tau }=0$ and hence
\begin{equation*}
h\ ^{\ \alpha }\mathcal{D}_{\tau }\ ^{\ \alpha }\mathbf{e}_{h\mathbf{X}}=[\
^{\ \alpha }\mathbf{L}_{\tau },\ ^{\ \alpha }\mathbf{e}_{h\mathbf{X}}]=0 %
\mbox{ for } h\ ^{\ \alpha }\mathcal{D}_{\tau }=h\ ^{\ \alpha }\mathbf{D}%
_{\tau }+[\ ^{\ \alpha }\mathbf{L}_{\tau },\cdot ].
\end{equation*}
We obtain the equation of fractional motion for the h--component of curve, $%
h\gamma (\tau ,\mathbf{l}),$ following the correspondences $\ ^{\ \alpha }%
\mathbf{D}_{h\mathbf{Y}}\longleftrightarrow h\ ^{\ \alpha }\mathcal{D}_{\tau
}$ and $h\gamma _{\mathbf{l}}\longleftrightarrow \ ^{\ \alpha }\mathbf{e}_{h%
\mathbf{X}},$ $\ ^{\ \alpha }\mathbf{D}_{h\mathbf{Y}}\left( h\gamma (\tau ,%
\mathbf{l})\right) =0,$ which is just the first fractional equation in (\ref%
{-1map}).

\section{Fractional Nonholonomic mKdV and SG Hierarchies}

\label{s5} In this section, we consider explicit constructions when
fractional solitonic hierarchies are derived following the conditions of
Theorem \ref{mt}.

The fractional h--flow and v--flow equations resulting from (\ref{-1map}) are%
\begin{equation}
\overrightarrow{v}_{\tau }=-\ ^{\ \alpha }\overrightarrow{R}h\overrightarrow{%
\mathbf{e}}_{\perp }\mbox{ \ and \ }\overleftarrow{v}_{\tau }=-\ ^{\ \alpha }%
\overleftarrow{S}v\overleftarrow{\mathbf{e}}_{\perp },  \label{deveq}
\end{equation}%
when, respectively,%
\begin{equation*}
0=\overrightarrow{\varpi }=-\ ^{\ \alpha }\mathbf{D}_{h\mathbf{X}}h%
\overrightarrow{\mathbf{e}}_{\perp }+h\mathbf{e}_{\parallel }\overrightarrow{%
v},~\ ^{\ \alpha }\mathbf{D}_{h\mathbf{X}}h\mathbf{e}_{\parallel }=h%
\overrightarrow{\mathbf{e}}_{\perp }\cdot \overrightarrow{v}
\end{equation*}%
and
\begin{equation*}
0=\overleftarrow{\varpi }=-\ ^{\ \alpha }\mathbf{D}_{v\mathbf{X}}v%
\overleftarrow{\mathbf{e}}_{\perp }+v\mathbf{e}_{\parallel }\overleftarrow{v}%
,~\ ^{\ \alpha }\mathbf{D}_{v\mathbf{X}}v\mathbf{e}_{\parallel }=v%
\overleftarrow{\mathbf{e}}_{\perp }\cdot \overleftarrow{v}.
\end{equation*}%
The fractional d--flow equations possess horizontal and vertical
conservation laws%
\begin{equation*}
\ ^{\ \alpha }\mathbf{D}_{h\mathbf{X}}\left( (h\mathbf{e}_{\parallel
})^{2}+|h\overrightarrow{\mathbf{e}}_{\perp }|^{2}\right) =0,
\end{equation*}%
for $(h\mathbf{e}_{\parallel })^{2}+|h\overrightarrow{\mathbf{e}}_{\perp
}|^{2}=<h\mathbf{e}_{\tau },h\mathbf{e}_{\tau }>_{h\mathfrak{p}}=|\left(
h\gamma \right) _{\tau }|_{h\ ^{\ \alpha }\mathbf{g}}^{2},$ and
\begin{equation*}
\ ^{\ \alpha }\mathbf{D}_{v\mathbf{Y}}\left( (v\mathbf{e}_{\parallel
})^{2}+|v\overleftarrow{\mathbf{e}}_{\perp }|^{2}\right) =0,
\end{equation*}%
for $(v\mathbf{e}_{\parallel })^{2}+|v\overleftarrow{\mathbf{e}}_{\perp
}|^{2}=<v\mathbf{e}_{\tau },v\mathbf{e}_{\tau }>_{v\mathfrak{p}}=|\left(
v\gamma \right) _{\tau }|_{v\ ^{\ \alpha }\mathbf{g}}^{2}.$ This corresponds
to
\begin{equation*}
\ ^{\ \alpha }\mathbf{D}_{h\mathbf{X}}|\left( h\gamma \right) _{\tau }|_{h\
^{\ \alpha }\mathbf{g}}^{2}=0\mbox{ \ and \ }\ ^{\ \alpha }\mathbf{D}_{v%
\mathbf{X}}|\left( v\gamma \right) _{\tau }|_{v\ ^{\ \alpha }\mathbf{g}%
}^{2}=0.
\end{equation*}

We can also rescale conformally the variable $\tau $ in order to get $%
|\left( h\gamma \right) _{\tau }|_{h\ ^{\ \alpha }\mathbf{g}}^{2}$ $=1$ and
(it could be for other rescaling) $|\left( v\gamma \right) _{\tau }|_{v\ ^{\
\alpha }\mathbf{g}}^{2}=1,$ i.e. to have%
\begin{equation*}
(h\mathbf{e}_{\parallel })^{2}+|h\overrightarrow{\mathbf{e}}_{\perp }|^{2}=1%
\mbox{ \ and \ }(v\mathbf{e}_{\parallel })^{2}+|v\overleftarrow{\mathbf{e}}%
_{\perp }|^{2}=1.
\end{equation*}%
Then, we express $h\mathbf{e}_{\parallel }$ and $h\overrightarrow{\mathbf{e}}%
_{\perp }$ in terms of $\overrightarrow{v}$ and its derivatives and,
similarly, we express $v\mathbf{e}_{\parallel }$ and $v\overleftarrow{%
\mathbf{e}}_{\perp }$ in terms of $\overleftarrow{v}$ and its derivatives,
which follows from (\ref{deveq}). The N--adapted fractional wave map
equations describing the -1 flows reduce to a system of two independent
nonlocal fractional evolution equations for the h-- and v--components,%
\begin{eqnarray*}
\overrightarrow{v}_{\tau }&=&-\ ^{\ \alpha }\mathbf{D}_{h\mathbf{X}%
}^{-1}\left( \sqrt{\ ^{\ \alpha }\overrightarrow{R}^{2}-|\overrightarrow{v}%
_{\tau }|^{2}}~\overrightarrow{v}\right), \\
\overleftarrow{v}_{\tau }&=&-\ ^{\ \alpha }\mathbf{D}_{v\mathbf{X}%
}^{-1}\left( \sqrt{\ ^{\ \alpha }\overleftarrow{S}^{2}-|\overleftarrow{v}%
_{\tau }|^{2}}~\overleftarrow{v}\right) .
\end{eqnarray*}
We can rescale the equations on $\tau $ to the case when the terms $\ ^{\
\alpha }\overrightarrow{R}^{2},\ ^{\ \alpha }\overleftarrow{S}^{2}=1,$ and
the fractional evolution equations transform into a system of hyperbolic
d--vector equations,%
\begin{equation}
\ ^{\ \alpha }\mathbf{D}_{h\mathbf{X}}(\overrightarrow{v}_{\tau })=-\sqrt{1-|%
\overrightarrow{v}_{\tau }|^{2}}~\overrightarrow{v}\mbox{ \ and \ }\ ^{\
\alpha }\mathbf{D}_{v\mathbf{X}}(\overleftarrow{v}_{\tau })=-\sqrt{1-|%
\overleftarrow{v}_{\tau }|^{2}}~\overleftarrow{v},  \label{heq}
\end{equation}%
where $\ ^{\ \alpha }\mathbf{D}_{h\mathbf{X}}=\ ^{\ \alpha }\partial _{h%
\mathbf{l}}$ and $\ ^{\ \alpha }\mathbf{D}_{v\mathbf{X}}=\ ^{\ \alpha
}\partial _{v\mathbf{l}}$ are usual partial derivatives on direction $%
\mathbf{l=}h\mathbf{l+}v\mathbf{l}$ with $\overrightarrow{v}_{\tau }$ and $%
\overleftarrow{v}_{\tau }$ considered as scalar functions for the covariant
derivatives $\ ^{\ \alpha }\mathbf{D}_{h\mathbf{X}}$ and $\ ^{\ \alpha }%
\mathbf{D}_{v\mathbf{X}}$ defined by the fractional canonical d--connection.
It also follows that $h\overrightarrow{\mathbf{e}}_{\perp }$ and $v%
\overleftarrow{\mathbf{e}}_{\perp }$ obey corresponding fractional vector
sine--Gordon (SG) equations%
\begin{eqnarray}
\left( \sqrt{(1-|h\overrightarrow{\mathbf{e}}_{\perp }|^{2})^{-1}}~\ ^{\
\alpha }\partial _{h\mathbf{l}}(h\overrightarrow{\mathbf{e}}_{\perp
})\right) _{\tau }&=&-h\overrightarrow{\mathbf{e}}_{\perp }  \label{sgeh} \\
\left( \sqrt{(1-|v\overleftarrow{\mathbf{e}}_{\perp }|^{2})^{-1}}~\ ^{\
\alpha }\partial _{v\mathbf{l}}(v\overleftarrow{\mathbf{e}}_{\perp })\right)
_{\tau }&=&-v\overleftarrow{\mathbf{e}}_{\perp }.  \label{sgev}
\end{eqnarray}

\begin{conclusion}
The recursion fractional d--operator $\ ^{\ \alpha }\mathfrak{R}=(h\ ^{\
\alpha }\mathfrak{R,}h\ ^{\ \alpha }\mathfrak{R})$ (\ref{reqop}), see (\ref%
{reqoph}) and (\ref{reqopv}), generates two hierarchies of fractional vector
mKdV symmetries: the first one is horizontal,
\begin{eqnarray}
\overrightarrow{v}_{\tau }^{(0)} &=&\overrightarrow{v}_{h\mathbf{l}},~%
\overrightarrow{v}_{\tau }^{(1)}=h\ ^{\ \alpha }\mathfrak{R}(\overrightarrow{%
v}_{h\mathbf{l}})=\overrightarrow{v}_{3h\mathbf{l}}+\frac{3}{2}|%
\overrightarrow{v}|^{2}~\overrightarrow{v}_{h\mathbf{l}},  \label{mkdv1a} \\
\overrightarrow{v}_{\tau }^{(2)} &=&h\ ^{\ \alpha }\mathfrak{R}^{2}(%
\overrightarrow{v}_{h\mathbf{l}})=\overrightarrow{v}_{5h\mathbf{l}}+\frac{5}{%
2}\left( |\overrightarrow{v}|^{2}~\overrightarrow{v}_{2h\mathbf{l}}\right)
_{h\mathbf{l}}  \notag \\
&&+\frac{5}{2}\left( (|\overrightarrow{v}|^{2})_{h\mathbf{l~}h\mathbf{l}}+|%
\overrightarrow{v}_{h\mathbf{l}}|^{2}+\frac{3}{4}|\overrightarrow{v}%
|^{4}\right) ~\overrightarrow{v}_{h\mathbf{l}}-\frac{1}{2}|\overrightarrow{v}%
_{h\mathbf{l}}|^{2}~\overrightarrow{v},  \notag \\
&&...,  \notag
\end{eqnarray}%
with all such terms commuting with the fractional -1 flow
\begin{equation}
(\overrightarrow{v}_{\tau })^{-1}=h\overrightarrow{\mathbf{e}}_{\perp }
\label{mkdv1b}
\end{equation}%
associated to the fractional vector SG equation (\ref{sgeh}); the second one
is vertical,
\begin{eqnarray}
\overleftarrow{v}_{\tau }^{(0)} &=&\overleftarrow{v}_{v\mathbf{l}},~%
\overleftarrow{v}_{\tau }^{(1)}=v\ ^{\ \alpha }\mathfrak{R}(\overleftarrow{v}%
_{v\mathbf{l}})=\overleftarrow{v}_{3v\mathbf{l}}+\frac{3}{2}|\overleftarrow{v%
}|^{2}~\overleftarrow{v}_{v\mathbf{l}},  \label{mkdv2a} \\
\overleftarrow{v}_{\tau }^{(2)} &=&v\ ^{\ \alpha }\mathfrak{R}^{2}(%
\overleftarrow{v}_{v\mathbf{l}})=\overleftarrow{v}_{5v\mathbf{l}}+\frac{5}{2}%
\left( |\overleftarrow{v}|^{2}~\overleftarrow{v}_{2v\mathbf{l}}\right) _{v%
\mathbf{l}}  \notag \\
&&+\frac{5}{2}\left( (|\overleftarrow{v}|^{2})_{v\mathbf{l~}v\mathbf{l}}+|%
\overleftarrow{v}_{v\mathbf{l}}|^{2}+\frac{3}{4}|\overleftarrow{v}%
|^{4}\right) ~\overleftarrow{v}_{v\mathbf{l}}-\frac{1}{2}|\overleftarrow{v}%
_{v\mathbf{l}}|^{2}~\overleftarrow{v},  \notag \\
&&...,  \notag
\end{eqnarray}%
with all such terms commuting with the fractional -1 flow
\begin{equation}
(\overleftarrow{v}_{\tau })^{-1}=v\overleftarrow{\mathbf{e}}_{\perp }
\label{mkdv2b}
\end{equation}%
associated to the fractional vector SG equation (\ref{sgev}).
\end{conclusion}

\begin{proof}
It follows from the above, \ in this section, and Corollary \ref{c2}. $%
\square $
\end{proof}

Finally, using the above Conclusion, we derive that the adjoint fractional
d--operator $\ ^{\ \alpha }\mathfrak{R}^{\ast }=\ ^{\ \alpha }\mathcal{%
J\circ \ ^{\ \alpha }H}$ generates a horizontal hierarchy of fractional
Hamiltonians,%
\begin{eqnarray}
h\ ^{\ \alpha }H^{(0)} &=&\frac{1}{2}|\overrightarrow{v}|^{2},~h\ ^{\ \alpha
}H^{(1)}=-\frac{1}{2}|\overrightarrow{v}_{h\mathbf{l}}|^{2}+\frac{1}{8}|%
\overrightarrow{v}|^{4},  \label{hhh} \\
h\ ^{\ \alpha }H^{(2)} &=&\frac{1}{2}|\overrightarrow{v}_{2h\mathbf{l}}|^{2}-%
\frac{3}{4}|\overrightarrow{v}|^{2}~|\overrightarrow{v}_{h\mathbf{l}}|^{2}-%
\frac{1}{2}\left( \overrightarrow{v}\cdot \overrightarrow{v}_{h\mathbf{l}%
}\right) +\frac{1}{16}|\overrightarrow{v}|^{6},...,  \notag
\end{eqnarray}%
and vertical hierarchy of fractional Hamiltonians%
\begin{eqnarray}
v\ ^{\ \alpha }H^{(0)} &=&\frac{1}{2}|\overleftarrow{v}|^{2},~v\ ^{\ \alpha
}H^{(1)}=-\frac{1}{2}|\overleftarrow{v}_{v\mathbf{l}}|^{2}+\frac{1}{8}|%
\overleftarrow{v}|^{4},  \label{hhv} \\
v\ ^{\ \alpha }H^{(2)} &=&\frac{1}{2}|\overleftarrow{v}_{2v\mathbf{l}}|^{2}-%
\frac{3}{4}|\overleftarrow{v}|^{2}~|\overleftarrow{v}_{v\mathbf{l}}|^{2}-%
\frac{1}{2}\left( \overleftarrow{v}\cdot \overleftarrow{v}_{v\mathbf{l}%
}\right) +\frac{1}{16}|\overleftarrow{v}|^{6},...,  \notag
\end{eqnarray}%
all of which are conserved densities for respective horizontal and vertical
fractional -1 flows and determining higher conservation laws for the
corresponding hyperbolic fractional equations (\ref{sgeh}) and (\ref{sgev}).

Finally we note that two concrete examples of fractional solitonic solutions in gravity are studied in section 5.3 of Ref. \cite{vrfrg} . The "integer" solitonic hierarchies from \cite{vacap,vanco}   can be similarly generated in  "fractional" forms by using corresponding fractional Caputo derivative operators. The length of this paper does not allow us consider such applications.

\vskip5pt \textbf{Acknowledgement: } This work contains the results of a talk at the 3d Conference on "Nonlinear Science and Complexity", 28--31 July, 2010, \c Chankaya University, Ankara, Turkey.

\end{document}